\newcommand{\beqn}{\begin{eqnarray}} 
\newcommand{\eeqn}{\end{eqnarray}} 
\newcommand{\be}{\begin{equation}} 
\newcommand{\ee}{\end{equation}} 
\newcommand{\ba}{\begin{array}} 
\newcommand{\ea}{\end{array}} 
\newcommand{\R}{{\rm\bf R}} 
\newcommand{\C}{{\rm\bf C}} 
\newcommand{\pa}{\partial} 
\newcommand{\re}{\ref} 
\newcommand{\ci}{\cite} 
\newcommand{\la}{\label} 
\newcommand{\fr}{\frac} 
\newcommand{\ov}{\overline} 
\newcommand{\ti}{\tilde} 
\newcommand{\si}{\sigma} 
\newcommand{\al}{\alpha} 
\newcommand{\dv}{{\rm div\,}} 
\newcommand{\eff}{{\rm eff}}
\newcommand{\ds}{\displaystyle} 
\newcommand{\ve}{\varepsilon} 
\newcommand{\rw}{\rm w} 
\newcommand{\curl}{{\rm curl\,}}
\newcommand{\cO}{{\cal O}}
\newcommand{\cH}{{\cal H}}
\newcommand{\cK}{{\cal K}}
\newcommand{\Y}{{\mathbb{Y}}}
\newcommand{\F}{{\mathbb{F}}}
\newcommand{\we}{\wedge} 
\newcommand{\de}{\delta} 
\newcommand{\De}{\Delta} 
\newcommand{\om}{\omega}
\newcommand{\na}{\nabla} 
\newcommand{\Lam}{\Lambda} 
\newcommand{\br}{|\kern-.25em|\kern-.25em|} 
\newcommand{\brr}{{|\kern-.15em|\kern-.15em|\kern-.15em}\,} 
\newcommand{\ddd}{\st{.\kern-.07em.\kern-.07em.}} 
\def\N{{\rm I\kern-.1567em N}}                              % Doppel-N 
\def\R{{\rm I\kern-.1567em R}}                              % Doppel R 
\def\C{{\rm C\kern-4.7pt                                    % Doppel C 
\vrule height 7.7pt width 0.4pt depth -0.5pt \phantom {.}}} 
\def\Z  {{\sf Z\kern-4.5pt Z}}                                % Doppel Z 
\begin{document} 
 \renewcommand{\theequation}{\thesection.\arabic{equation}} 
\newtheorem{theorem}{Theorem}[section] 
\renewcommand{\thetheorem}{\arabic{section}.\arabic{theorem}} 
\newtheorem{definition}[theorem]{Definition} 
\newtheorem{deflem}[theorem]{Definition and Lemma} 
\newtheorem{lemma}[theorem]{Lemma} 
\newtheorem{example}[theorem]{Example} 
\newtheorem{remark}[theorem]{Remark} 
\newtheorem{remarks}[theorem]{Remarks} 
\newtheorem{cor}[theorem]{Corollary} 
\newtheorem{pro}[theorem]{Proposition}

\begin{center}
{\Large On  global attraction to solitons for   
3D Maxwell-Lorentz equations }
%\footnote{No datasets were generated or analysed during the current study}
 \bigskip\\ 
{\large E.A. Kopylova, A.I. Komech}
\end{center}

\begin{abstract} 
We consider the Maxwell field  coupled to a single rotating charge. 
This Hamiltonian system admits soliton-type solutions, 
where the field is static, while the charge rotates with constant angular velocity. 
We prove that any solution of finite energy converges, in suitable local energy seminorms, to the corresponding
soliton in the long time limit $t\to\pm\infty$. 
\end{abstract}

%%%%%%%%%%%%%%%%%%%%%%%%%%%%%%%%%%%%%%%%%%%%%%%%%%%%%%%%%%%%%%%%%%%%%%%%%%%%%%%%%%%% 
 
\setcounter{equation}{0} 
 
\section{Introduction} 
We consider the Abraham--Lorentz model  describing a motion of  spinning extended charged particle in the Maxwell field $(E(x,t),B(x,t))$
(see \ci[Chapter 10]{S2004}).
We  restrict ourselves  to  the situation, where  the spinning  particle is located at the origin. 
This can be achieved by assuming the (anti-) symmetry conditions
$E(-x)=-E(x)$, $B(-x)=B(x)$  for the initial fields. Then this property persists for all times.
In this case the Maxwell--Lorentz equations  read
\begin{equation} \la{mls1}
\left\{\begin{array}{rcl}
\dot E(x,t)\!\!&\!\!=\!\!&\!\!\curl B(x,t)-[\om(t)\we x]\rho(x),\qquad \dot B(x,t)= - \curl E(x,t)
\\
\dv E(x,t)\!\!&\!\!=\!\!&\!\!\rho (x),\qquad\qquad\qquad\quad \quad \quad\,\,
\dv B(x,t)=0
\\
 I\dot \om(t)\!\!&\!\!=\!\!&\langle x\we \big[E(x,t)+(\om(t)\we x)\we B(x,t)\big],\rho(x)\rangle
\end{array}\right|, \qquad
\end{equation}
Here  $m$ is  the mass of the particle,  $\rho(x)$ is the charge distribution, $\om(t)$ is the angular velocity,
$$
I=\frac 23m_b\int|x|^2\rho(x) dx>0
$$
is the bare moment of inertia associated to the bare mass $m_b$; all other constants are set equal to unity. 
The brackets $\langle,\rangle$ denote the inner product in the real Hilbert space $L^2:=L^2(\R^3)\otimes\R^3$. 
We assume that   real-valued charge density $\rho(x)\not\equiv 0$ is smooth and    spherically-invariant, i.e.,
\be\la{rosym}
\rho\in C_0^\infty(\R^3),\quad\rho(x)=\rho_{rad}(|x|),\quad \rho(x)=0~~ {\rm for}~~|x|\ge R_\rho>0.
\ee
For any $\om\in R^3$ the system \eqref{mls1} admits a stationary state (solitons) $(E_\om(x),B_\om(x),\om)$ \cite{IKS2004, S2004}. 
We denote by ${\cal S}:=\{(E_\om(x),B_\om(x),\om),~\om\in\R^3\}$
the set of all solitons.

Let  $\mu_0=0$, $\mu_{-j}=-\mu_{j}$,  $j\in\N$, be   zeros of the function 
$$
g(\mu)=i\sqrt{\frac{2}{\pi}}\int_{0}^{\infty}\frac{\mu r\cos (\mu r)-\sin(\mu r)}{\mu^2}\rho_{rad}(r)\,rdr, \quad \mu\in\R.
$$
The  set of zeros is (at most) countable and does not contains  accumulation points (see \cite{Kunze}).
We suppose that  
\be\la{mm}
\mu_j+\mu_k\ne \mu_\ell ~~{\rm for} ~~(j,k,\ell)\not\in \{(-n,n,0), (0,n,n), (n,0,n)\}.
\ee
We assume that  for some  $R_0 > 0$ 
the initial fields $E_0(x), B_0(x)\in C^2(\R^3\setminus {\rm B}_{R_0})$, and  
 \beqn\nonumber 
&&|x|\big(|E_0(x)|+|B_0(x)|\big) +|x|^2\big(|\nabla E_0(x)|+|\nabla B_0(x)|\big) ={\cal O}(|x|^{-\sigma}), \\
\la{8}  
 &&|\na\na E_0(x)|+|\na\na B_0(x)|= {\cal O}(|x|^{-1}),\quad x\to\infty
 \eeqn
 with  some  $\si>1/2$. Here $B_{R_0}$ is the ball $\{x\in\R^3:\,|x|\le R_0\}$.

Under these conditions, Kunze  proved (see \cite[Theorem 1.4]{Kunze}))  the relaxation of angular acceleration:
\be\la{dot-om-decay} 
\dot\om(t)\to 0, \quad \ddot\om\to 0, \qquad t\to\infty
\ee
for $m_b\not\in{\cal M}_\rho$, where ${\cal M}_\rho$ is an at most countable set.   
The set $\cal M_\rho$  is defined in \cite[(6.1)]{Kunze}.
For example, for the uniformly charged ball,  ${\cal M}_\rho= \{4\pi(\mu_j^2-30)\mu_j^{-4}, j\in\Z\}$.
The convergence  \eqref{dot-om-decay} implies the convergence of  solutions to the set $\cal S$ of all solitons:
\be\la{K-res}
\inf\limits_{\ti\om\in\R^3}\big(|\om(t)-\ti\om|+\Vert E(t)-E_{\ti\om}\Vert_{L^2(B_R)}+\Vert B(t)-B_{\ti\om}\Vert_{L^2(B_R)}\big)\to 0,\quad t\to\infty,
\ee
for any $R>0$.
The main result of the present paper is the convergence to a particular soliton:
\be\la{EK-res}
|\om(t)-\om_{\pm}|+\Vert E(t)-E_{\om_{\pm}}\Vert_{L^2(B_R)}+\Vert B(t)-B_{\om_{\pm}}\Vert_{L^2(B_R)}\to 0,\quad t\to\pm\infty,
\ee
for some $\om_{\pm}\in\R^3$ depend on the solution.
To prove the convergence \eqref {EK-res} 
we  combine 
the limit  \eqref{dot-om-decay} with the orbital stability established in  \cite{KK2023}.
The orbital stability ensures that the solution with  initial data close to a  soliton remain so. 

Let us comment on our approach.
We rewrite the system \eqref{mls2} in Maxwell potentials ( equations  \eqref{mls2}).
Then for the field part $(A,\dot A)$ we get  the inhomogeneous wave equation.
Using the limit \eqref{dot-om-decay} and the strong 
 Huygen's principle for the wave  equations, we modify 
the  field part  so that the modified fields satisfy
an inhomogeneous wave equation and  coincide with some soliton fields outside the light cone. Moreover, the modified fields
 are close to the soliton fields for  large time.
The modified  trajectory  satisfies a new system of equations which is a small perturbation of the system (\ref{mls2}) for large times.  

Further, we estimate oscillations of  Hamiltonian and of the Casimir invariant,  
included in the Lyapunov function, along the modified trajectory.
Finally,  we use these estimates and  a lower bound for the Lyapunov function  to prove
that  the angular velocity has the limits
\be\la{ek-lim}
\om(t)\to \om_{\pm}, \qquad t\to\pm\infty.
\ee
The limits imply the global attraction to  particular soliton by \eqref{K-res}.

Similar global attraction is proved in \ci{KS1} for a moving 
 particle without rotation ($\om\equiv 0$) in scalar wave field. 
 In \ci{IKM2004} the result was extended  to the Maxwell--Lorentz system 
with moving particle without rotation.  The global attraction in the case  
when the particles is both moving and rotating is an open problem.

 %%%%%%%%%%%%%%%%%%%%%%%%%%%%%%%%%%%%%%%%%%%%%%%%%%%%%
%%%%%%%%%%%%%%%%%%%%%%%%%%%%%%%%%%%%%%%%%%%%%%%%%%%%% 
\setcounter{equation}{0}
\section{The Maxwell potentials}
%%%%%%%%%%%%%%%%%%%%%%%%%%%%%%%%%%%%%%%%%%%%%%%%%%%%%%
In the Maxwell potentials $A(x,t)=(A_1(x,t),A_2(x,t),A_3(x,t))$ and $\Phi(x,t)$, we have
\be\la{AA}
B(x,t)=\curl A(x,t),\qquad E(x,t)=-\dot A(x,t)-\na \Phi(x,t).
\ee
We choose the Coulomb gauge: $\dv A(x,t)=0$.
Then the  first two lines of the system (\ref{mls1}) are equivalent to the system
\be\la{2mA}
\left\{\ba{rcl}
-\ddot A(x,t)-\na\dot\Phi(x,t)&=&-\De A(x,t)-(\om\we x) \rho(x)
\\
-\De\Phi(t)&=&\rho(x)
\ea\right|.
\ee
Here the second equation   can be solved explicitly:
\be\la{2mA3}
\Phi(x)=-\frac{1}{(2\pi)^3}\int e^{-ik\cdot x}\fr{\hat\rho(k)}{k^2}\,dk=\fr1{4\pi}\int \fr{\rho(y)} {|x-y|} dy.
\ee
 In the projection on divergence-free fields,  the first equation of (\ref{2mA}) is equivalent to the wave equation
\be\la{2mA2}
\ddot A=\De A+\om\we \varrho(x),\qquad \varrho(x):=x\rho(x).
\ee
Then the system  (\ref{mls1}) becomes
\be\la{mls2}
\left\{\ba{rcl}
\dot A(x,t)&=&\Pi(x,t)
\\
\dot \Pi(x,t)&=&\De A(x,t)+\om(t)\we \varrho(x),
\\
 I\dot \om(t)&=&\langle \Pi(x,t)\we\varrho(x)\rangle- \om(t)\we\langle A(x,t)\we\varrho(x)\rangle
 \ea\right|.
\ee
In the last  equation, the terms with $\Phi(x,t)$ cancel:
$\langle  x\we \na\Phi(x,t),\rho(x)\rangle=0$. 
We suppose that  
 $A_0(x)\in C^3(\R^3\setminus {\rm B}_{R_0})$, $\Pi_0(x)\in C^2(\R^3\setminus {\rm B}_{R_0})$, and  
\beqn\nonumber
&&|A_0(x)|+|x|\big( |\na A_0(x)|+|\Pi_0(x)|\big)+ |x|^2\big(|\nabla\nabla A_0(x)|+|\nabla \Pi_0(x)|\big) ={\cal O}(|x|^{-\sigma}),\\ 
\la{A-8} 
 && |\na\na\na A_0(x)|+|\na\na \Pi_0(x)|={\cal O}(|x|^{-1}) ,\quad x\to\infty
 \eeqn 
Evidently, \eqref{A-8} implies \eqref{8}.
%%%%%%%%%%%%%%%%%%%%%%%%%%%%%%%%%%%%%%%%%%%%%%%%%%%%%%%%%%%%%
%%%%%%%%%%%%%%%%%%%%%%%%%%%%%%%%%%%%%%%%%%%%%%%%%%%%%%%%%%%%%%
\setcounter{equation}{0} 
 \section{Well posedness} 
%%%%%%%%%%%%%%%%%%%%%%%%%%%%%%%%%%%%%%%%%%%%%%%%%%%%%%%%%%%%%%
We first define a suitable phase space. 
We denote the Sobolev spaces $H^s=H^s(\R^3)\otimes\R^3$ with $s\in\R$,
and  $\dot H^1=\dot H^1(\R^3)\otimes\R^3$.
%%%%%%%
\begin{definition}
{\it i)} $\F=\dot H^1(\R^3)\oplus L^2(\R^3)$ 
is the Hilbert space of   vector  fields $F=(A,\Pi)$ with finite norm
$$
\Vert F\Vert_{\F}:=\Vert \na A\Vert_{L^2(\R^3)}+\Vert \Pi\Vert_{L^2(\R^3)}.
$$
{\it ii)}  $\Y=\F\oplus\R^3$ is the Hilbert space of   Y= $(F, \om)$ with finite norm
$$
\Vert Y\Vert_{\Y}=\Vert F\Vert_{\F}+|\om|
$$
\end{definition}
%%%%%%%%%%%%%%%%%%%%%%%%%% 
On $\F$ and $\Y$ we define the local energy seminorms by 
\be\la{6} 
\Vert F\Vert_R=\Vert \na A\Vert_R + \Vert \Pi\Vert_R\quad
{\rm and} \quad
\Vert Y\Vert_R=\Vert F\Vert_R+|\om|
\ee 
for every $R>0$, where $\Vert\cdot\Vert_R$ is the norm in $L^2(B_R)$. 

We write the system (\re{mls2}) as a dynamical equation on $\Y$ 
\be\la{eq-Y} 
\dot Y(t)={\cal F}(Y(t)),\qquad t\in\R, \qquad Y(t) = (A(x,t),\Pi(x,t), \om(t))\in \Y.
\ee 
%%%%%%%%%%%%%%%%%%%%%%%
 \begin{pro}\la{ex} 
For any initial state $Y(0)=(A(x,0),\Pi(x,0),\om(0))\in\Y$, 
the equation {\rm (\ref{eq-Y})}  admits a unique
solution
\be\la{soluY}
Y(t)=(A(x,t),\Pi(x,t),\om(t))\in C(\R,\Y).
\ee
ii) The map $W(t):Y(0)\mapsto Y(t)$ is continuous in $\Y$ for every $t\in\R$.
\smallskip\\
{\rm iii)} The energy is conserved:
\be\la{ecoA}
\cH(Y(t)):=\ds\fr12\int [|\Pi(x,t)|^2+|\curl A(x,t)|^2]dx+\fr12I \om^2(t)=\cH(Y(0)),\quad t\in\R. \ee
iii) The estimate holds, 
\be\la{ov-v} 
|\omega(t)|\le\ov \omega,\quad t\in\R. 
\ee 
 \end{pro}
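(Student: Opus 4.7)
The plan is to rewrite \eqref{eq-Y} in Duhamel form and iterate. Split $\mathcal{F} = \mathcal{L} + \mathcal{N}$, where $\mathcal{L}$ generates the free wave group $U(t)$ on $\F$ via $\dot A = \Pi$, $\dot\Pi = \De A$, and $\mathcal{N}$ collects the coupling terms $(0,\om\we\varrho,I^{-1}[\langle \Pi\we\varrho\rangle-\om\we\langle A\we\varrho\rangle])$. Since $U(t)$ is a $C_0$-group of isometries on $\F$ (standard energy estimate for the wave equation in $\dot H^1\oplus L^2$), the system becomes
\be\la{pl-duh}
Y(t)=\big(U(t)F(0),\om(0)\big)+\int_0^t \big(U(t-s)(0,\om(s)\we\varrho),I^{-1}\om'\text{-RHS}(s)\big)\,ds.
\ee
Because $\varrho\in C_0^\infty$, the map $(A,\Pi,\om)\mapsto \mathcal{N}(A,\Pi,\om)$ is locally Lipschitz on $\Y$ (the couplings reduce to inner products of $A,\Pi$ with the fixed compactly supported $\varrho$ and products with the vector $\om$). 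Hence a standard contraction argument on $C([0,T],\Y)$ for small $T$ depending only on $\|Y(0)\|_\Y$ gives local existence, uniqueness, and continuous dependence of $Y(t)$ on $Y(0)$ (the latter via Gronwall applied to the difference of two Duhamel solutions).

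Next I would verify that the Coulomb gauge $\dv A=0$ is propagated: one checks that $\dv\varrho=0$ for radial $\rho$, so $\De A+\om\we\varrho$ is divergence-free, and $\dv A(t)=\dv A(0)=0$. Energy conservation \eqref{ecoA} is then proved by direct differentiation: for classical solutions,
\be\la{pl-en}
\frac{d\cH}{dt}=\int\Pi\cdot(\De A+\om\we\varrho)\,dx+\int\curl A\cdot\curl\Pi\,dx+\om\cdot\big(\langle\Pi\we\varrho\rangle-\om\we\langle A\we\varrho\rangle\big).
\ee
The $\Delta A$ term cancels the $\curl A\cdot\curl\Pi$ term (using $\dv A=0$, so $\curl\curl A=-\De A$, and integration by parts), the last parenthesis loses its second summand because $\om\cdot(\om\we X)\equiv 0$, and the remaining $\int\Pi\cdot(\om\we\varrho)\,dx$ cancels $\om\cdot\langle\Pi\we\varrho\rangle$ by the cyclic identity $a\cdot(b\we c)=-b\cdot(a\we c)$. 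For general $Y(0)\in\Y$, \eqref{ecoA} follows by approximation using the continuity of $W(t)$ and density of smooth data in $\Y$.

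Global existence and the bound \eqref{ov-v} both follow from \eqref{ecoA}: since $\cH\ge \frac12 I\om^2$, we get $|\om(t)|\le\bar\om:=\sqrt{2\cH(Y(0))/I}$, while $\|\na A(t)\|_{L^2}+\|\Pi(t)\|_{L^2}\le C\sqrt{\cH(Y(0))}$ (using $\|\na A\|_{L^2}=\|\curl A\|_{L^2}$ under the gauge $\dv A=0$). These apriori bounds on $\|Y(t)\|_\Y$ preclude blow-up in finite time, so the local solution extends to all $t\in\R$. The main technical obstacle I anticipate is the rigorous justification of \eqref{ecoA} at the $\Y$-regularity level, since the Duhamel formula gives $Y(t)\in C(\R,\Y)$ but the computation \eqref{pl-en} requires enough smoothness to integrate by parts; this is handled by approximating $Y(0)$ by smooth data, using the continuous dependence from step one to pass to the limit in \eqref{ecoA}. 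A minor point is establishing $\|\na A\|_{L^2}=\|\curl A\|_{L^2}$ on the homogeneous space $\dot H^1$ for divergence-free fields, which is routine via Fourier transform.
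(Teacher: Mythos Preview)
The paper does not actually prove this proposition; immediately after the statement it says ``This proposition is a special version of Proposition 2.2 proved in \cite{KK2022}.'' So there is no in-paper argument to compare your proposal against.

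Your outline is the standard route (Duhamel formulation, contraction mapping for local well-posedness and continuous dependence, energy identity for global extension) and is essentially correct. Two small points to clean up. First, the claim ``$\dv\varrho=0$ for radial $\rho$'' is false: with $\varrho(x)=x\rho(x)$ one has $\dv\varrho=3\rho+|x|\rho_{rad}'(|x|)\not\equiv0$. What you actually need for gauge preservation is $\dv(\om\we\varrho)=-\om\cdot\curl(x\rho(x))=0$, and this holds because $\curl(x\rho)=\nabla\rho\we x=0$ for radial $\rho$. Second, when you argue that $\mathcal N$ is locally Lipschitz, the pairing $\langle A\we\varrho\rangle$ is not an $L^2$ inner product since $A\in\dot H^1$ only; you should invoke the embedding $\dot H^1(\R^3)\hookrightarrow L^6(\R^3)$ (or equivalently the Hardy/Poincar\'e-type control of $A$ on the compact support of $\varrho$ by $\Vert\na A\Vert_{L^2}$) to make that term a bounded linear functional of $A$. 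With these fixes your plan goes through.
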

This proposition is a special version of Proposition 2.2 proved in   \cite{KK2023}.
Denote
\be\la{pi}
\pi:=I\om+\langle\varrho\we A\rangle
\ee 
In \cite{KK2023} was proved that the system \eqref{mls2} admits the functional family of invariants
\be\la{casimir}
C(Y)=f(|\pi|),\qquad  f\in C^1(R)
\ee
 known as {\it Casimir} invariants. 
%%%%%%%%%%%%%%%%%%%%%%%%%%%%%%%%%%%%%%%%%%%%%%%%%%%%%%%%%%%%%%%%%%%%%%%%%%%%%  
\setcounter{equation}{0}
 \section{ Kirchhoff representation of solutions} 
 %%%%%%%%%%%%%%%%%%%%%%%%%%%%%%%%%%%%%%%%%%%%%%%%%%%%%%%%%%%%%%%%%%%%%%%%%%%%%  
Denote by  $\cK_t$ the  $6\times6$ - matrix-valued distributions 
\be\la{mt} 
\cK_t :=\left(\ba{cc} \dot K_t & K_t\\ \ddot K_t&\dot K_t\ea\right),\quad  K_t(x):=\fr 1 {4\pi t}\de(|x|-|t|).
\ee 
By the Kirchhoff formula, 
$$
F(x,t)=F_{r}(x,t)+F_{K}(x,t),\quad  F_{r}(x,t)=\begin{pmatrix} A_r(x,t),\Pi_r(x,t)\end{pmatrix},\quad   F_{K}(x,t)=\begin{pmatrix} A_K(x,t),\Pi_K(x,t)\end{pmatrix},
$$ 
where
\beqn\la{A20} 
\left(\ba {c}A_{r}(x,t)\\ \Pi_{r}(x,t)\ea\right) 
=\int_0^t \Big[\cK_{t-s}(x) *\left(\ba{c} 0\\ \om(s)\we\varrho(x)\ea\right)\,ds; 
\eeqn 
\beqn\la{A19} 
\left(\ba {c}A_{K}(x,t)\\\Pi_{K}(x,t)\ea\right) 
=\cK_t(x) 
*\left(\ba{c} A_0(x)\\ \Pi_0(x)\ea\right). 
\eeqn 
%%%%%%%%%%%%%%%%%%%%%%%%%%%%%%%%%%%%%%%%%%%%%%%%%%%%%%%%%%%%%%%%%%%%%%%%%%%%%%%%
\begin{lemma}\la{WPI00} (cf. [Lemma 3.3]\cite{KS1})
Let $(A_0,\Pi_0)$  satisfies (\re {A-8}) with some $\si >1/2$. Then 
\be\la{Fk-decay}
 |\na A_K(x,t)|+|\Pi_K(x,t)|=\cO( t^{-1-\si}),\quad t\to\infty.
\ee
\end{lemma}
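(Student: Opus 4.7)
The plan is to carry out the Kirchhoff convolution \eqref{A19} explicitly via spherical means and then bound the resulting integrals termwise. Using $K_t * f(x) = t\,M_t[f](x)$, where $M_t[f](x) := (4\pi)^{-1}\int_{S^2}f(x+t\omega)\,d\omega$ is the spherical mean operator, the standard Kirchhoff formula for the wave equation yields
$$
A_K(x,t)=\fr{1}{4\pi}\int_{S^2}A_0(x+t\omega)\,d\omega+\fr{t}{4\pi}\int_{S^2}\omega\cdot\na A_0(x+t\omega)\,d\omega+\fr{t}{4\pi}\int_{S^2}\Pi_0(x+t\omega)\,d\omega,
$$
and $\Pi_K(x,t)=\partial_t A_K(x,t)$. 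I would then differentiate these identities under the integral in $x$ and in $t$ to obtain analogous spherical-mean expressions for $\na A_K$ and $\Pi_K$: each application of $\partial_x$ or $\partial_t$ to the integrand replaces one copy of $A_0$ or $\Pi_0$ by one of its derivatives, while $\partial_t$ hitting the prefactor $t$ produces harmless lower-order terms.

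Next, for any bounded $x$ and $t$ so large that $t>|x|+R_0$, every point $y=x+t\omega$ on the integration sphere satisfies $|y|\ge t-|x|$, so $|y|$ is of order $t$ and lies in the exterior of $B_{R_0}$ where the hypothesis \eqref{A-8} provides the pointwise bounds
$$
|\na A_0(y)|+|\Pi_0(y)|=\cO(t^{-1-\si}),\qquad |\na\na A_0(y)|+|\na\Pi_0(y)|=\cO(t^{-2-\si}).
$$
Substituting these into the formulas produced in the previous step, each spherical-average term of the form $(4\pi)^{-1}\int_{S^2}\na A_0(x+t\omega)\,d\omega$ or $(4\pi)^{-1}\int_{S^2}\Pi_0(x+t\omega)\,d\omega$ is directly $\cO(t^{-1-\si})$, while each term carrying the prefactor $t$ multiplies a second-derivative integrand of order $\cO(t^{-2-\si})$, and so contributes $t\cdot\cO(t^{-2-\si})=\cO(t^{-1-\si})$ as well. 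Summing the finitely many contributions yields the claim for both $|\na A_K|$ and $|\Pi_K|$.

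I do not expect a serious obstacle: the argument is essentially the vector-valued adaptation of \cite[Lemma 3.3]{KS1} and reduces to bookkeeping derivatives. The hypothesis \eqref{A-8} is calibrated precisely so that the factor of $t$ in the Kirchhoff kernel is absorbed by one additional power of decay on each derivative of $A_0$ and $\Pi_0$, producing the gain of one unit over the naive $\cO(t^{-\si})$ decay one would get for $A_K$ itself. The only mild care needed is to verify that the integration sphere lies outside $B_{R_0}$, where the $C^3$-regularity of $A_0$ and the decay estimates in \eqref{A-8} are available; this is automatic once $t\ge |x|+R_0$.
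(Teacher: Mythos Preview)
Your proposal is correct and follows essentially the same route as the paper: write $A_K$ via Kirchhoff spherical means, differentiate under the integral, and bound each term using the decay hypotheses \eqref{A-8} on the sphere $|y-x|=t$ once $t>|x|+R_0$. The only cosmetic difference is that the paper evaluates $\int_{S^2}|x+tz|^{-\alpha}\,d^2z$ via an exact formula, whereas you use the simpler pointwise bound $|x+t\omega|\ge t-|x|$; both yield the same $\cO(t^{-1-\si})$ for $x$ in a bounded set, which is all that is needed downstream.
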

%%%%%%%%
\begin{proof}
By \eqref{A19},
\beqn\nonumber
&&A_K(x,t)=\fr 1{4\pi t}\int\limits_{S_t(x)}\Pi_0(y)~d^2y+
 \fr \pa {\pa t}~\Big(~\fr 1{4\pi t}\int\limits_{S_t(x)}A_0(y)~d^2y~~\Big)\\
\nonumber
&&=\fr{t}{4\pi}\int\limits_{S_1(0)}~\Pi_0(x+tz)\,d^2z+\fr{1}{4\pi}\int\limits_{S_1(0)}A_0(x+tz)\,d^2z
+\fr{t}{4\pi}\int\limits_{S_1(0)}\na A_0(x+tz)\cdot z\,d^2z.
\eeqn
\beqn\nonumber
\na A_K(x,t)&=&\fr{t}{4\pi}\int_{S_1(0)}\na_x\Pi_0(x+tz)\,d^2z+\fr{1}{4\pi}\int_{S_1(0)}\na_x A_0(x+tz)\,d^2z\\
\nonumber
&+&\fr{t}{4\pi}\int_{S_1(0)}\na_x(\na A_0(x+tz)\cdot z)\,d^2z.
\eeqn
Here $S_t(x)$ denotes the sphere $\{y:~|y-x|=t\}$.
Here all derivatives are understood in the classical sense. A similar representation holds for $\Pi_K(x,t)=\dot A_K(x,t)$.
Hence, taking into account the assumption (\re{A-8}), we obtain  for $t>R_0+|x|$,
\beqn\la{WPbd}
|\na A_K(x,t)|+|\Pi_k(x,t)||&\le& C \sum_{s=0}^1 t^{s}\int_{S_1(0)}|x+tz|^{-\sigma-1-s}\,d^2z.
\eeqn
For $\al\not=2$  one has 
\be\la{alpha}
\int_{S_1(0)}|x+tz|^{-\al}\,d^2z=\fr{2\pi}{(\al-2)|x|t}\Big((t-|x|)^{-\al+2}-(t+|x|)^{-\al+2}\Big)=\cO(t^{-\al}) 
\ee
as $t\to\infty$ since for any $\beta\in\R$,
$$
(t+|x|)^{\beta}-(t-|x|)^{\beta}=t^{\beta}\big[(1+\frac{|x|}t)^{\beta}-(1-\frac{|x|}t)^{\beta}\big]\sim  2\al |x|t^{\beta-1}
$$
Finally, \eqref{Fk-decay} follows by \eqref{WPbd} and \eqref{alpha}.
\end{proof}
 %%%%%%%%%%%%%%%%%%
%%%%%%%%%%%%%%%%%%%%%%%%%%%%%%%%%%%%%%%%%%%%%%%%%%%%%%%%%%
\setcounter{equation}{0}
\section{Orbital stability of solitons}
%%%%%%%%%%%%%%%%%%%%%%%%%%%%%%%%%%%%%%%%%%%%%%%%%%%%%%%%%%
Solitons of the system (\ref{mls2}) are stationary solutions 
\be\la{solvom}
Y_{\omega}(x)=S_{\omega}(x)=(A_{\om}(x), 0,\om).
\ee
Substituting  into \eqref{mls2}, we get
\be\la{solit2}
\Delta  A_{\omega}(x)=-\omega\we\varrho(x).
\ee
In the Fourier representation
\be\la{solit3}
\hat A_{\omega}(k)=\fr{\omega\we\hat\varrho(k)}{k^2}.
\ee
By \eqref{rosym}, $A_{\omega}\in H^s$ for any $\om\in\R^3$ and any $s\in\R$.

One has
\be\la{kappa}
\langle\varrho\we A_{w}\rangle=\langle\hat\varrho\we\hat A_{\om}\rangle=\langle\hat\varrho\we(\frac{\om \we \hat \varrho}{k^2})\rangle
=\om\langle\frac{\hat\varrho}{k^2},\hat\varrho\rangle-\langle\frac{\hat\varrho}{k^2},\om\cdot\hat\varrho\rangle
=\frac 23\varkappa_0\om,
\ee
where we denote
\be\la{kappa0}
\varkappa_{0}=\int\frac{|\hat\varrho(k)|^2}{k^2} dk=\sum\limits_{j=1}^3\int\frac{|\hat\varrho_j(k)|^2}{k^2} dk. 
\ee
Hence, 
\be\la{M0}
\pi_\om=I\om+\langle\hat\varrho\we \hat A_{\om}\rangle=\om(I+\frac 23\varkappa_0)=\om I_{\eff},\quad 
 I_{\eff}:=I+\frac 23\varkappa_0
\ee
%%%%%
 Denote 
\be\la{Lam}
\Lambda_\om={\cal H}(Y)-|\om||\pi|.
\ee
In \ci{KK2023}  the following  result on orbital stability is proved.
%%%%%%%%%%%%%%%%%%% 
\begin{pro}\la{A} \ci[Proposition 4.1]{KK2023} 
Let condition \eqref{rosym} holds. 
Then   the following upper bound holds with an $\al> 0$: 
\be\la{o-s}
\delta\Lambda_{\om}:=\Lambda_{\om}(S_\om + \de Y)-\Lam_{\om}(S_\om)\ge \al\Vert \de Y\Vert_{\Y}^2,\quad
\de Y\in\Y,\quad \Vert Y\Vert_{\Y}\ll 1.
\ee 
 \end{pro}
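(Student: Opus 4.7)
The plan is a standard Lyapunov analysis. I will Taylor-expand $\Lam_\om(S_\om+\de Y)-\Lam_\om(S_\om)$ in $\de Y$, verify that $S_\om$ is a constrained critical point so the linear part vanishes, show the Hessian is coercive on $\Y$, and absorb the cubic-and-higher remainder into the quadratic lower bound for $\|\de Y\|_\Y$ small.

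For the first-order step, the variation of $\cH$ at $S_\om$ is $\int\curl A_\om\cdot\curl\de A\,dx+I\om\cdot\de\om$, which by the soliton equation $-\De A_\om=\om\we\varrho$ of \eqref{solit2} and the Coulomb gauge $\dv\de A=0$ reduces, via integration by parts, to $\om\cdot\langle\varrho\we\de A\rangle+I\om\cdot\de\om$. The variation of $|\om||\pi|$ is $|\om|(\pi_\om/|\pi_\om|)\cdot\de\pi$ with $\de\pi=I\de\om+\langle\varrho\we\de A\rangle$; since $\pi_\om=I_{\eff}\om$ by \eqref{M0}, it equals $\om\cdot\de\pi$ and cancels the $\cH$-contribution term by term. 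For the second-order expansion, one uses
\[
|\pi_\om+\de\pi|=|\pi_\om|+\hat\om\cdot\de\pi+\frac{|\de\pi_\perp|^2}{2|\pi_\om|}+O(|\de\pi|^3),
\]
with $\de\pi_\perp$ the component orthogonal to $\om$ and $|\pi_\om|=I_{\eff}|\om|$, producing (together with the purely quadratic part of $\cH$) the Hessian
\[
Q(\de Y)=\tfrac12\|\de\Pi\|_{L^2}^2+\tfrac12\|\na\de A\|_{L^2}^2+\tfrac12 I|\de\om|^2-\frac{|\de\pi_\perp|^2}{2I_{\eff}},
\]
in which $\|\curl\de A\|_{L^2}=\|\na\de A\|_{L^2}$ under the Coulomb gauge.

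The main obstacle is to establish $Q(\de Y)\ge\al\|\de Y\|_\Y^2$ for some $\al>0$, because the negative Casimir term threatens coercivity. My plan is to split $\de A=A_\mu+\de A^\sharp$ with $\mu=(3/(2\varkappa_0))\langle\varrho\we\de A\rangle$, so that $A_\mu$ solves \eqref{solit2} with $\om\mapsto\mu$, $\langle\varrho\we\de A^\sharp\rangle=0$, and the summands are $\dot H^1$-orthogonal with $\|\na A_\mu\|_{L^2}^2=(2\varkappa_0/3)|\mu|^2$ by \eqref{kappa}--\eqref{kappa0}; this delivers the sharp bound $|\langle\varrho\we\de A\rangle|^2\le(2\varkappa_0/3)\|\na\de A\|_{L^2}^2$. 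I would then split $\de\om$ and $\langle\varrho\we\de A\rangle$ into components parallel and perpendicular to $\om$ and complete the square in the perpendicular $2\times 2$ block; the algebraic identity $I_{\eff}-I=(2/3)\varkappa_0$ produces exactly the cancellation needed to keep the residual form non-negative, while the parallel block, the $\dot H^1$-remainder $\|\de A^\sharp\|_{\dot H^1}^2$, and $\|\de\Pi\|_{L^2}^2$ are untouched and supply the coercivity gap. Finally, the cubic remainder $\de\Lam_\om-Q(\de Y)$ is $O(\|\de Y\|_\Y^3)$ by standard Sobolev-type estimates on $\hat\varrho$, and is dominated by $(\al/2)\|\de Y\|_\Y^2$ once $\|\de Y\|_\Y$ is sufficiently small.
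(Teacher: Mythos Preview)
The paper does not prove this proposition here; it is quoted from \cite{KK2023}, so there is no in-paper proof to compare against. I evaluate your argument on its own merits.

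Your Taylor expansion and the first-order cancellation are correct, and your Hessian
\[
Q(\de Y)=\tfrac12\|\de\Pi\|_{L^2}^2+\tfrac12\|\na\de A\|_{L^2}^2+\tfrac12 I|\de\om|^2-\frac{|\de\pi_\perp|^2}{2I_{\eff}}
\]
is right. The gap is in the coercivity step. After your splitting, the perpendicular $2\times2$ block in the variables $(u,v)=(\de\om_\perp,\langle\varrho\we\de A\rangle_\perp)$ has matrix
\[
\begin{pmatrix}\dfrac{I\varkappa_0}{3I_{\eff}} & -\dfrac{I}{2I_{\eff}}\\[2mm] -\dfrac{I}{2I_{\eff}} & \dfrac{3I}{4\varkappa_0 I_{\eff}}\end{pmatrix},
\]
whose determinant is exactly zero by $I_{\eff}-I=\tfrac23\varkappa_0$; its kernel is $v=\tfrac{2\varkappa_0}{3}u$. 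This null direction corresponds to $\de Y=(A_{\de\om},0,\de\om)$ with $\de\om\perp\om$, for which \emph{all} of the ``untouched'' terms you invoke ($\|\de A^\sharp\|_{\dot H^1}$, $\|\de\Pi\|_{L^2}$, the parallel block) also vanish, so $Q(\de Y)=0$ while $\|\de Y\|_\Y\ne0$. This is the tangent to the two-sphere $\{S_{\om'}:|\om'|=|\om|\}$, on which $\Lambda_\om$ is constant (by \eqref{M0}, $\Lambda_\om(S_{\om'})=\tfrac12 I_{\eff}(|\om'|^2-2|\om||\om'|)$ depends only on $|\om'|$). A direct check gives $\Lambda_\om(S_{\om+\de\om})-\Lambda_\om(S_\om)=O(|\de\om|^4)$ for $\de\om\perp\om$, so the inequality \eqref{o-s} as literally stated cannot hold with $\al>0$.

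What the paper actually uses in \eqref{lo-bound-t} is the bound under the constraint $\pi(Y)=\pi_\om$ (this is how $\ti\om(t)$ is chosen in \eqref{piY}). On that set $\de\pi=0$, the negative term in $Q$ disappears, and $Q$ is trivially coercive. Your argument would be correct once you impose $\de\pi=0$ (or work transversally to the soliton sphere) and state the conclusion accordingly; without such a restriction the coercivity claim is false.
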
 
 %%%%%%%%%%%%%%%%%%%%%%%%%%%%%%%%%%%%%%%%%%%%%%%%%%%%%%%%%%%%%%%%%%%%%%%%%%%%%  
 \setcounter{equation}{0} 
 \section{Main result} 
 %%%%%%%%%%%%%%%%%%%%%%%%%%%%%%%%%%%%%%%%%%%%%%%%%%%%%%%%%%%%%%%%%%%%%%%%%%%%% 
 The main results of the paper is the following  theorem. 
 \begin{theorem}\la{A1} 
 Let conditions \eqref{rosym}   and \eqref{mm} hold,  $m_b\not\in {\cal M}_{\rho}$. 
Let $Y(t)=(A(t),\Pi(t),\om(t))\in C(\R,\Y)$ be the solution to \eqref{mls2} with initial data
$Y_0=(A_0\,, \Pi_0,\om_0)\in \Y$  satisfying \eqref{A-8}
with some $\sigma>1/2$. Then  for every $R>0$, 
  \be\la{10} 
\lim\limits_{t\to\pm\infty} \Vert Y(t)-Y_{\pm\om}\Vert_R=0.
\ee 
\end{theorem}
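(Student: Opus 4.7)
The plan is to upgrade the subsequential convergence to the soliton manifold in \eqref{K-res} into convergence to a single soliton by proving that $\om(t)$ itself has a limit as $t\to\pm\infty$. The three ingredients are Kunze's relaxation \eqref{dot-om-decay}, strong Huygens' principle for the wave equation \eqref{2mA2}, and the coercive lower bound \eqref{o-s} for the Lyapunov functional $\Lam_\om$ at the soliton.

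First, I would decompose $F(t)=(A(t),\Pi(t))=F_K(t)+F_r(t)$ via \eqref{A19}--\eqref{A20}. By Lemma \ref{WPI00}, $\Vert F_K(t)\Vert_R=\cO(t^{-1-\si})\to 0$, so this piece is locally negligible. For the retarded part $F_r$, the Kirchhoff kernel $\cK_{t-s}$ is supported on the sphere $|x-y|=t-s$, while $\varrho$ lives in $B_{R_\rho}$. Hence $F_r(x,t)$ for $|x|\le R$ depends only on the values $\om(s)$ with $s$ in a bounded window of length $T=R+R_\rho$; this is exactly the place where strong Huygens' principle enters. Fix $\tau\gg 1$ and set $\ve(\tau):=\sup_{s\ge\tau-T}|\dot\om(s)|$, which tends to zero by \eqref{dot-om-decay}.

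Second, I would construct a modified field $\ti F_\tau$ by replacing $\om(s)$ with the constant value $\om(\tau)$ inside the retarded integral \eqref{A20} for $s\in[\tau-T,\tau]$. By strong Huygens and the static equation \eqref{solit2}, $\ti F_\tau$ coincides with the soliton fields $(A_{\om(\tau)},0)$ inside the backward light cone of $(\cdot,\tau)$, while outside it inherits the decay of $F_K$. The pointwise estimate $|\om(s)-\om(\tau)|\le T\ve(\tau)$ together with a Duhamel comparison then gives $\Vert F(t)-\ti F_\tau(t)\Vert_R=\cO(\ve(\tau))$ for $t\in[\tau-T/2,\tau]$. The modified trajectory $\ti Y_\tau(t):=(\ti F_\tau(t),\om(t))$ solves a perturbation of \eqref{mls2} whose inhomogeneity is $\cO(\ve(\tau))$, and on the window it is close to $S_{\om(\tau)}$ in the \emph{full} $\Y$-norm, not merely locally.

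Third, I would apply the Lyapunov bound. Although $\cH$ and $|\pi|$ are conserved along the true solution, their values along $\ti Y_\tau$ oscillate by $\cO(\ve(\tau))$ on $[\tau-T,\tau]$, which one quantifies by differentiating in $t$ and substituting the perturbed equations. Combining this oscillation bound with \eqref{o-s} applied with parameter $\om(\tau)$ and $\de Y=\ti Y_\tau(t)-S_{\om(\tau)}$ yields a Cauchy-type inequality
\[
|\om(\tau_2)-\om(\tau_1)|\le C\bigl(\ve(\tau_1)+\ve(\tau_2)\bigr)
\]
for all large $\tau_1<\tau_2$. This forces the existence of $\om_+=\lim_{t\to+\infty}\om(t)$; then \eqref{10} for $t\to+\infty$ follows from \eqref{K-res}, and $t\to-\infty$ is symmetric. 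The hardest step is the second: the modification $\ti F_\tau$ must be close to $F$ in local norms \emph{and} coincide with an exact soliton in a region large enough for the global lower bound \eqref{o-s} (which is in the full $\Y$-norm) to be invoked, while its deviation from an actual solution of \eqref{mls2} remains controlled by $\ve(\tau)$ rather than by $|\om(t)-\om(\tau)|$ over an unbounded past. Strong Huygens' principle makes such a splitting possible, but the quantitative bookkeeping of the perturbation terms is the technical core of the argument.
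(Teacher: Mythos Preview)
Your overall strategy matches the paper's: reduce \eqref{10} to the existence of $\lim\om(t)$, modify the trajectory so as to invoke the global coercive bound \eqref{o-s}, and control the drift of $\cH$ and $|\pi|$ along the modified trajectory. You also correctly single out strong Huygens as the mechanism. The gap is in the construction of $\ti F_\tau$: you freeze $\om(s)$ to $\om(\tau)$ on a \emph{recent} window $[\tau-T,\tau]$. By Huygens this forces $\ti F_\tau(x,\tau)$ to coincide with the soliton $F_{\om(\tau)}(x)$ only in the ball $|x|\lesssim R$. For $R\lesssim|x|\lesssim\tau$ the retarded integral at $(x,\tau)$ reads $\om(s)$ with $s\approx\tau-|x|$, which lies outside your window and can differ from $\om(\tau)$ by $\cO(1)$; hence $\ti F_\tau(\tau)-F_{\om(\tau)}$ is \emph{not} small in $\F$, and \eqref{o-s}---a full $\Y$-norm inequality---cannot be applied with $\de Y=\ti Y_\tau-S_{\om(\tau)}$. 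The remark that ``outside it inherits the decay of $F_K$'' does not help: $F_K$ is a separate summand, and the far field of the retarded part still carries the entire unfrozen history. You explicitly flag this tension in your last paragraph, but the construction you propose does not resolve it.

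The paper freezes in the opposite direction. It sets $\om_\ve(s)=\rw_\ve:=\om(t_\ve)$ for all $s\le t_\ve$, keeps $\om_\ve=\om$ for $s\ge t_{1,\ve}$, and defines $F_\ve(t)$ by the Duhamel integral \eqref{A20eps} over $(-\infty,t]$. Huygens now gives $F_\ve(x,t)=F_{\rw_\ve}(x)$ for $|x|>t-t_\ve+R_\rho$, i.e.\ the modified field is an \emph{exact} soliton at spatial infinity; at the reference time $T_\ve=t_\ve+1+2R_\rho$ this exterior is $|x|>3R_\rho+1$, and on the remaining fixed ball a short Duhamel estimate over $[t_\ve,T_\ve]$ yields $\Vert F_\ve(T_\ve)-F_{\rw_\ve}\Vert_{H^1\oplus L^2}=\cO(\ve)$. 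This genuinely global $\cO(\ve)$ closeness is what licenses \eqref{o-s}. Simultaneously, for $t\ge T_\ve$ one has $F_\ve=F_r$ on $\mathrm{supp}\,\rho$, so the $\om$-equation holds with $F_\ve$ in place of $F$ up to the remainder from $F_K$, of size $\cO(t^{-1-\si})$ by \eqref{Fk-decay}; this makes $\cH(Y_\ve)$ and $|\pi(Y_\ve)|$ drift only by $\cO(T^{-\si})$. From there the argument concludes essentially as you outline.
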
 
By \eqref{K-res}, for the  prove \eqref{10} it suffices to prove the existence of the limits (\ref{ek-lim}).  
We combine the bound \eqref{o-s} and the relaxation of the acceleration \eqref{dot-om-decay}
with a Hamiltonian structure for the  system (\ref{mls2}). 
We prove (\ref{ek-lim})  only for $t\to+\infty$ since the system is time-reversal. Introduce  
$$ 
{\rm osc}_{[T;+\infty)}\om(t):=\sup\limits_{t_1,t_2\ge T}|\om(t_1)-\om(t_2)|. 
$$  
The existence of the limits (\ref{ek-lim}) follows from the following proposition.  
%%%%%% 
\begin{pro}\la{5.1} 
Let the assumptions of Theorem \re{A1} be fulfilled. 
Then  
\be\la{osc} 
{\rm osc}_{[T;+\infty)}\om(t)\to 0\,\,\,{\rm as}\,\,\,T\to{+\infty}. 
\ee 
\end{pro}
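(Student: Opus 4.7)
The plan is to implement the scheme announced after \eqref{EK-res}, combining Kunze's relaxation \eqref{dot-om-decay}, the Kirchhoff representation of Section 4, and the orbital coercivity of Proposition \ref{A}.

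\emph{Step 1 (Field modification via Huygens' principle).} Fix a large $T$ and set $\om_T:=\om(T)$. Split $F(t)=F_r(t)+F_K(t)$ as in \eqref{A20}, \eqref{A19}; by Lemma \ref{WPI00} the free part is $\cO(t^{-1-\si})$ in every local seminorm. For $F_r$ the strong Huygens principle is the key: since $\varrho$ is supported in $B_{R_\rho}$, the surface kernel $\cK_{t-s}*\varrho$ in \eqref{A20} is supported in $t-s\in[|x|-R_\rho,|x|+R_\rho]$, so for $|x|\le R$ and $t\gg R+R_\rho$ only values $\om(s)$ with $s$ in a bounded window ending near $t$ contribute. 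Using \eqref{dot-om-decay} to replace $\om(s)$ by $\om(t)$ on this window, and the limiting-amplitude identity $A_{\om(t)}=\int_{\R^3}K_\infty*(\om(t)\we\varrho)$ provided by \eqref{solit2}, one obtains $\Vert A_r(t)-A_{\om(t)}\Vert_R\to 0$ and the analogous statement for $\Pi_r$. I then assemble a modified state $\tilde Y(t)=(\tilde A(t),\tilde\Pi(t),\om(t))$ by patching, via a cutoff on the forward light cone of $\mathrm{supp}\,\varrho$ issuing from time $T$, the actual field $F(t)-F_K(t)$ inside with the soliton $S_{\om_T}$ outside; the patch is arranged to preserve the Coulomb gauge \eqref{Cg}. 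By construction $\Vert\tilde Y(t)-S_{\om_T}\Vert_\Y\to 0$ uniformly in $t\ge T$ as $T\to\infty$, placing $\tilde Y(t)$ in the small-neighborhood regime required by \eqref{o-s}.

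\emph{Step 2 (Slow drift of the Lyapunov function along $\tilde Y$).} A direct computation shows that $\tilde Y$ solves a perturbation of \eqref{mls2} whose extra right-hand side terms are controlled locally by $|\dot\om(t)|+|\ddot\om(t)|$ together with the Kirchhoff tail from Lemma \ref{WPI00}. Differentiating $\cH$ and the Casimir $C(Y)=f(|\pi|)$ (see \eqref{casimir}) along $\tilde Y$ and integrating from $T$ to $t$, I obtain
\[
\cH(\tilde Y(t))-\cH(\tilde Y(T))=\eta_1(T,t),\qquad |\tilde\pi(t)|-|\tilde\pi(T)|=\eta_2(T,t),
\]
where $\tilde\pi:=I\om+\langle\varrho\we\tilde A\rangle$ and, by \eqref{dot-om-decay} and Lemma \ref{WPI00}, $\sup_{t\ge T}(|\eta_1|+|\eta_2|)\to 0$ as $T\to\infty$. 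Consequently
\[
\Lam_{\om_T}(\tilde Y(t))-\Lam_{\om_T}(\tilde Y(T))=o(1)\quad\text{uniformly in }t\ge T.
\]

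\emph{Step 3 (Closing via orbital stability).} Step 1 ensures $\Vert\tilde Y(t)-S_{\om_T}\Vert_\Y\ll 1$ uniformly in $t\ge T$ once $T$ is large, so Proposition \ref{A} gives
\[
\al\Vert\tilde Y(t)-S_{\om_T}\Vert_\Y^2\le \Lam_{\om_T}(\tilde Y(t))-\Lam_{\om_T}(S_{\om_T}).
\]
Writing the right-hand side as $\big(\Lam_{\om_T}(\tilde Y(T))-\Lam_{\om_T}(S_{\om_T})\big)+o(1)$ via Step 2, and bounding the first bracket by $o(1)$ via the smallness of $\tilde Y(T)-S_{\om_T}$ and continuity of $\Lam_{\om_T}$, I conclude $\Vert\tilde Y(t)-S_{\om_T}\Vert_\Y=o(1)$ uniformly in $t\ge T$, hence $|\om(t)-\om_T|=o(1)$ uniformly, which is exactly \eqref{osc}.

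\emph{Main obstacle.} The delicate point is Step 1: the modification must simultaneously (i) lie in the $\Vert\cdot\Vert_\Y$-small neighborhood where the coercivity \eqref{o-s} is available (a genuinely local hypothesis), (ii) obey a perturbed evolution whose extra terms can be absorbed in $\eta_1,\eta_2$, and (iii) be compatible with the Coulomb gauge and with the matching of $\tilde A$ and its time derivative across the light-cone cutoff. The spectral hypothesis \eqref{mm} and $m_b\notin{\cal M}_\rho$ enter only indirectly, through the Kunze relaxation \eqref{dot-om-decay} that underwrites both Steps 1 and 2.
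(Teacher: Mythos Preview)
Your overall strategy matches the paper's, but there is a genuine circularity in Step 1 that breaks the argument. You assert that ``by construction $\Vert\tilde Y(t)-S_{\om_T}\Vert_\Y\to 0$ uniformly in $t\ge T$ as $T\to\infty$''. Since $\tilde Y(t)=(\tilde A(t),\tilde\Pi(t),\om(t))$ and $S_{\om_T}=(A_{\om_T},0,\om_T)$, this norm includes $|\om(t)-\om_T|$, so you are assuming \eqref{osc} in order to prove it. The relaxation $\dot\om(t)\to 0$ in \eqref{dot-om-decay} does \emph{not} by itself force $\om(t)$ to stay near $\om_T$: a function like $\sin(\log(1+t))$ has vanishing derivative but oscillation $2$ on every tail. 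Without smallness of $|\om(t)-\om_T|$ you cannot invoke the coercivity \eqref{o-s}, since Proposition \ref{A} requires $\Vert\de Y\Vert_\Y\ll 1$ for the \emph{full} state, including the $\om$-component.

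The paper escapes this trap by two devices. First, it modifies the \emph{source} rather than patching the field: it replaces $\om(s)$ by an $\om_\ve(s)$ equal to $\rw_\ve:=\om(t_\ve)$ for $s\le t_\ve$ and to $\om(s)$ afterwards, and defines $F_\ve$ as the retarded field from $-\infty$ with this source. By Huygens' principle this field \emph{equals} the soliton $F_{\rw_\ve}$ outside a light cone (no cutoff, no gauge surgery) and solves an exact wave equation, which makes the ``almost conservation'' of $\cH$ and $|\pi|$ along $Y_\ve$ a clean computation. Second, and crucially, the paper does \emph{not} center the Lyapunov estimate at a fixed $\om_T$. It defines a moving center $\tilde\om(t):=\pi(Y_\ve(t))/I_{\rm eff}$, chosen precisely so that $|\pi(Y_\ve(t))|=|\pi(Y_{\tilde\om(t)})|$; then the $|\om||\pi|$ terms in $\Lambda_{\tilde\om(t)}$ cancel and the Lyapunov gap reduces to a pure Hamiltonian difference $\cH(Y_\ve(t))-\cH(Y_{\tilde\om(t)})$, whose smallness follows from the almost conservation of $\cH$ and of $|\pi|$ without any a priori control on $|\om(t)-\om_T|$. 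Only \emph{after} the coercivity bound yields $\Vert F_\ve(t)-F_{\tilde\om(t)}\Vert_{\cal F}\to 0$ does one recover $\om(t)-\tilde\om(t)\to 0$ from the definition of $\pi$, and hence \eqref{osc}. Your Step 3 can be repaired along these lines, but Step 1 as written assumes the conclusion.
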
 
%%%%%%% 
 The idea of the proof is as follows. We modify the trajectory  of the solution. 
 The new trajectory  satisfy a new system of equations which is a small perturbation of the system (\ref{mls2}) for large $t$.  
 %%%%%%%%%%%%%%%%%%%%%%%%%%%%%%%%%%%%%%%%%%%%%%%%%%
\setcounter{equation}{0} 
\section{Modification of the trajectory}
%%%%%%%%%%%%%%%%%%%%%%%%%%%%%%%%%%%%%%%%%%%%%%%%%%% 
By \eqref{dot-om-decay}, for every $\ve>0$ there exists $t_\ve$ such that 
\be\la{5.7} 
|\dot \om(t)|\le \ve \quad {\rm for}~~~t\geq t_\ve, \quad
{\rm and}~~t_\ve\to\infty~~{\rm as}~~\ve\to 0\,. 
\ee 
Let us consider the point 
\be\la{moms} 
t_{1,\ve}=t_{\ve}+1,\quad  t_{2,\ve}=t_{1,\ve}+R_\rho, \quad t_{3,\ve}=t_{2,\ve}+R_\rho. 
\ee 
We set 
$$ 
\rw_{\ve}=\om(t_{\ve}).
$$ 
From  (\re{5.7}) it follows  that there exists 
$\om_\ve(t)\in C^1(\R)$ such that 
\beqn\la{5.8} 
\om_\ve(t)= 
\left\{ 
\ba{lll} 
\om(t)&{\rm \,\,for\,\,}&t\in[t_{1,\ve},+\infty)\,,\\ 
\rw_{\ve}& {\rm \,\,for\,\,}&t\in(-\infty,t_{\ve}]\,, 
\ea 
\right. 
\eeqn 
 and 
\be\la{5.9} 
|\dot \om_\ve(t)|\leq C\ve{\rm \,\,\,for\,\,all\,\,}t\in\R 
\ee 
with $C>0$ independent of $\ve\in (0,1)$. 
Now we set 
\beqn\la{A20eps} 
F_\ve(x,t)=\begin{pmatrix}A_{\ve}(x,t)\\ \Pi_{\ve}(x,t)\end{pmatrix}
=\int_{-\infty}^t
\cK_{t-s}(x) *\left(\ba{c} 0\\ \om_\ve(s)\we\varrho(x)\ea\right)\, ds,\quad t>0. 
\eeqn 
%%%%%%%%%%%%%%%%%%%%%%%%%%%%%%%%%%%%%%%%%%%%%%%%%%
\subsection{Modified fields}
%%%%%%%%%%%%%%%%%%%%%%%%%%%%%%%%%%%%%%%%%%%%%%%%%%% 
Here we show that the modified fields satisfy the inhomogeneous wave equation, coincide with soliton fields outside a certain light cone, 
and coincide with the retarded fields $F_{r}$ defined in \eqref{A20} in a smaller light cone.  
%%%%%%%%%%%%%%%%%%
\begin{lemma}\la{44.2} 
i) The fields $F_\ve$  coincide with a soliton 
outside a light cone: 
\be\la{ret-sol} 
F_\ve(x,t)=F_{\rw_\ve}(x)=\begin{pmatrix} A_{\rw_\ve}(x)\\0\end{pmatrix},\quad {\rm for}~~|x|>t-t_\ve+R_{\rho}. 
\ee 
ii) $F_\ve(x,t)$ satisfies the system 
\be\la{Bve}
\dot F_{\ve}(x,t)=\begin{pmatrix} 0 & 1 \\ \Delta & 0\end{pmatrix} F_\ve(x,t)+\begin{pmatrix} 0\\ \om_\ve(t)\we\varrho(x)\end{pmatrix}
\quad  t\in\R, \quad x\in\R^3.
\ee
iii) The field $F_{\ve}(x,t)$ coincide with $F_{r}(x,t)$ in 
the light cone $\{|x|<t-t_{2,\ve}\}$. 
\end{lemma}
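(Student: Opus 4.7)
For part (ii), the statement follows directly by differentiating the Duhamel-type formula \eqref{A20eps} and using the fundamental-solution property of $\cK_t$. Specifically, since $\cK_t$ is the propagator matrix for the wave system, $\cK_0=\mathrm{Id}$, and $\partial_t \cK_t=\begin{pmatrix}0&1\\ \Delta&0\end{pmatrix}\cK_t$, differentiating $F_\ve(x,t)=\int_{-\infty}^t \cK_{t-s}*(0,\om_\ve(s)\we\varrho)^T\,ds$ in $t$ yields the boundary term $(0,\om_\ve(t)\we\varrho(x))^T$ plus the integral with $\partial_t\cK_{t-s}$ acting on the source, which is precisely $\begin{pmatrix}0&1\\ \Delta&0\end{pmatrix}F_\ve$. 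Convergence of the $s\to-\infty$ integral requires checking that $\cK_{t-s}*(0,\rw_\ve\we\varrho)$ is a fixed distribution (see next paragraph); this makes the Duhamel manipulation rigorous.

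For part (i), the plan is to compare $F_\ve$ with the static soliton $F_{\rw_\ve}$. Since $\om_\ve(s)\equiv \rw_\ve$ for $s\le t_\ve$, splitting \eqref{A20eps} and substituting $u=t-s$ in the tail integral gives $\int_{0}^{\infty}\cK_u(x)*(0,\rw_\ve\we\varrho)^T\,du$ (plus a contribution from $[t_\ve,t]$). The tail integral is $t$-independent, and by Fubini it equals $\bigl(\tfrac{1}{4\pi}\int\frac{\rw_\ve\we\varrho(y)}{|x-y|}\,dy,\,0\bigr)^T = (A_{\rw_\ve}(x),0)^T$, since $A_{\rw_\ve}$ is the Newtonian potential of $\rw_\ve\we\varrho$ by \eqref{solit2}. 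Thus $G_\ve:=F_\ve-F_{\rw_\ve}$ reduces to the retarded integral $\int_{t_\ve}^t \cK_{t-s}*(0,(\om_\ve(s)-\rw_\ve)\we\varrho)^T\,ds$. Since $\cK_{t-s}$ is supported on the sphere $|x-y|=t-s$ and $\varrho$ is supported in $|y|\le R_\rho$, a nonzero contribution at $(x,t)$ forces the existence of $s\in[t_\ve,t]$ and $y$ with $|y|\le R_\rho$ and $|x-y|=t-s\le t-t_\ve$; hence $|x|\le t-t_\ve+R_\rho$. Taking contrapositives yields $G_\ve(x,t)=0$ for $|x|>t-t_\ve+R_\rho$, which is \eqref{ret-sol}.

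For part (iii), I would directly compare $F_\ve$ with $F_r$ from \eqref{A20}. Since $\om_\ve(s)=\om(s)$ for $s\ge t_{1,\ve}$ by \eqref{5.8}, the difference $F_\ve(x,t)-F_r(x,t)$ reduces, after cancelling the common part over $[t_{1,\ve},t]$, to
\[
\int_{-\infty}^{t_{1,\ve}}\cK_{t-s}*(0,\om_\ve(s)\we\varrho)^T\,ds \;-\; \int_{0}^{t_{1,\ve}}\cK_{t-s}*(0,\om(s)\we\varrho)^T\,ds,
\]
which is supported in $s\le t_{1,\ve}$. Applying the same Huygens-type support argument as in (i), a nonzero contribution at $(x,t)$ requires $|x|\ge t-s-R_\rho\ge t-t_{1,\ve}-R_\rho=t-t_{2,\ve}$, so $F_\ve=F_r$ on the inner cone $|x|<t-t_{2,\ve}$.

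\textbf{Main obstacle.} The conceptually delicate step is part (i): identifying the $s\to-\infty$ tail of the Duhamel integral with the soliton field $F_{\rw_\ve}$. This requires Fubini/change-of-variables on an improper integral, together with the observation that the spherical-means integral collapses to the Newtonian potential. Once this identification is made, parts (i) and (iii) are both instances of the same finite-propagation principle applied to the kernel $\cK_t$, and the three cones in the statement are bookkeeping with the offsets $R_\rho$ built into $t_{2,\ve}$ in \eqref{moms}.
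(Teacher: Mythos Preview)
Your proposal is correct and the finite-propagation bookkeeping in (i) and (iii) matches the paper's. The route diverges in how you identify the tail of the Duhamel integral with the soliton and in how you justify (ii).

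For (i), the paper represents the static soliton $F_{\rw_\ve}$ as the solution of the Cauchy problem \eqref{Bve} with initial data $F_{\rw_\ve}$ at time $-T$, writes the corresponding Kirchhoff formula
\[
F_{\rw_\ve}(x)=\int_{-T}^t \cK_{t-s}*\begin{pmatrix}0\\ \rw_\ve\we\varrho\end{pmatrix}\,ds+\cK_{t+T}*F_{\rw_\ve},
\]
and sends $T\to+\infty$ using the dispersive decay of Lemma~\ref{WPI00} to kill the last term. You instead compute $\int_0^\infty \cK_u*(0,\rw_\ve\we\varrho)^T\,du$ directly: the second component telescopes to $0$ by compact support of $\varrho$, and the first becomes the Newtonian potential $\frac{1}{4\pi}\int\frac{\rw_\ve\we\varrho(y)}{|x-y|}\,dy=A_{\rw_\ve}(x)$ by coarea. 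This is more elementary---it bypasses Lemma~\ref{WPI00} entirely---and makes transparent why the tail is exactly the soliton rather than merely converging to it.

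For (ii), the paper does not differentiate the improper integral directly. Instead it uses (i) to reduce to the cone $\{|x|<t-t_\ve+R_\rho\}$, where by Huygens the integral truncates to the finite interval $[t_\ve-2R_\rho,t]$; then it compares with an auxiliary field $\ti F_T$ (finite Duhamel plus $\cK_{t+T}*F_{\rw_\ve}$) that manifestly solves \eqref{Bve}, and again uses Lemma~\ref{WPI00} to pass to the limit. Your direct differentiation is cleaner, and your worry about convergence at $s\to-\infty$ is in fact moot: for each fixed $x$ the integrand $\cK_{t-s}*(0,\om_\ve(s)\we\varrho)$ vanishes once $t-s>|x|+R_\rho$ by the strong Huygens principle, so the integral is genuinely finite pointwise and differentiation under the integral sign is immediate. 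Noting this would sharpen your write-up.
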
 
%%%%%%%%%%%%%%%% 
\begin{proof} 
i) Consider the soliton fields $F_{\rw_\ve}(x)$ as the solution of the Cauchy problem for system 
(\re{Bve}) with initial data $F^{-T}=F_{\rw_\ve}(x))$ at $-T$, where $T>0$.

Applying  the formulas of type (\re{A20})--(\re{A19}), we obtain 
\be\la{solT} 
F_{\rw_\ve}(x)=\int_{-T}^t \cK_{t-s}(x) *\left(\ba{c} 0\\ \rw_\ve\we\varrho(x)\ea\right)\,ds +\cK_{t+T} *F_{\rw_\ve}.
\ee 
By  (\re{Fk-decay}),  the last summand in \eqref{solT} tends to zero in 
$H^1_{\rm loc}(\R^3)\oplus L^2_{\rm loc}(\R^3)$   as $T\to+\infty$.
 Hence, proceeding to the 
limit as $T\to+\infty$ we obtain the identity of distributions, 
\be\la{solinf} 
F_{\rw_\ve}(x)=\int_{-\infty}^t \cK_{t-s}(x) *\left(\ba{c} 0\\ \rw_\ve\we\varrho(x)\ea\right)\,ds. 
\ee 
By \eqref{rosym} and \eqref{mt}, the integration domain   in the inner integrals of \eqref{A20eps}  and \eqref{solinf} is 
\be\la{dom}
D(s)= \{y\in B_{R_{\rho}} : t-s=|x-y|\}.
\ee
In the region $|x|>t-t_\ve+R_{\rho}$, one has $|x-y|> t-t_\ve+R_{\rho}-R_{\rho}= t-t_\ve$.
Comparing with \eqref{dom}, we  get $s<t_\ve$. Hence, 
in the region $|x|>t-t_\ve+R_{\rho}$, the right hand side of (\re{solinf})
 coincides with  (\re{A20eps}) by (\re{5.8}). 

ii)  By i) it suffices to prove that $F_\ve$ is a solution to \eqref {Bve} in $(x,t)\in K_0=\{|x|<t-t_{\ve}+R_{\rho}\}$. 
The strong Huygen's principle  implies 
\beqn\la{A22eps} 
F_{\ve}(x,t)=\int_{t_\ve-2R_{\rho}}^t\cK_{t-s}(x) *\left(\ba{c} 0\\ \om_\ve(s)\we\varrho(x)\ea\right)\,ds,\quad  (x,t)\in K_0 
\eeqn 
since in this case  $|x-y|<t-t_{\ve}+2R_{\rho}$ for $|y|<R_{\rho}$, and $D(s)$ is non-empty only for  $s>t_{\ve}-2R_{\rho}$.
 
For $T>|t_{\ve}-2R_{\rho}|$  introduce the field 
\beqn\la{A23eps} 
\ti F_T(x,t)=\int_{-T}^t\cK_{t-s}(x) *\left(\ba{c} 0\\ \om_\ve(s)\we\varrho(x)\ea\right)\,ds+\cK_{t+T}*F_{\rw_\ve}
\eeqn 
The field $\ti F_T(x,t)$ satisfies  (\re{Bve}) for $x\in\R^3$ and $t\in\R$ by the same argument as in the proof of i). 
Finally, the second summand in the right hand side of (\re{A23eps}) tends to zero as $T\to+\infty$ like in (\re{solT}). 
Hence, $F_{\ve}$ satisfy (\re{Bve})  in $K_0$. 
 
iii) follows from (\re{mt})--(\re{A20}),  (\re{5.8}) and (\re{A20eps}). Namely, in this case for $|y|<R_{\rho}$ we get 
$|x-y|<t-t_{2,\ve}+R_{\rho}=t-t_{1,\ve}$. Hence, $D(s)$ is non-empty only for $s>t_{1,\ve}$, where $\om_\ve(t)=\om(t)$ by \eqref{5.8}.
\end{proof}
\smallskip
%%%%%%%%%%%%%%%%%%%%%%% 
Note that $F_{\ve}(x,t_{3,\ve})=F_{\rw_\ve}(x)$  outside the ball $B_{3R_{\rho}+1}$ by Lemma \ref{44.2}-i). 
Now we show that at $t=t_{3,\ve}$  the modified fields $F_{\ve}(x,t_{3,\ve})$ are sufficiently close to  the soliton $F_{\rw_\ve}(x)$ 
in the ball $B_{3R_{\rho}+1}$.
\begin{lemma}
\be\la{fin} 
\Vert F_{\ve}(\cdot,t_{3,\ve})-F_{\rw_\ve}(\cdot)\Vert_{H^1(B_{3R_{\rho}+1})\oplus L^2(B_{3R_{\rho}+1})}={\cal O}(\ve). 
\ee 
\end{lemma}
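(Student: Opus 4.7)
The plan is to write the difference $G(x,t) := F_\ve(x,t) - F_{\rw_\ve}(x)$ as the solution of an inhomogeneous wave equation with zero initial data at $t = t_\ve$ and a source of size $O(\ve)$, and then apply the standard energy estimate.

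First I would observe that the soliton itself satisfies \eqref{Bve} with $\om_\ve$ replaced by the constant $\rw_\ve$: indeed $\dot F_{\rw_\ve} = 0$, while $\Delta A_{\rw_\ve} = -\rw_\ve \we \varrho$ by \eqref{solit2}. Combined with \eqref{Bve}, this gives
\begin{equation*}
\dot G(x,t) = \begin{pmatrix} 0 & 1 \\ \Delta & 0 \end{pmatrix} G(x,t) + \begin{pmatrix} 0 \\ (\om_\ve(t) - \rw_\ve) \we \varrho(x) \end{pmatrix}.
\end{equation*}
Next, comparing \eqref{A20eps} with \eqref{solinf} at $t = t_\ve$ and using $\om_\ve(s) = \rw_\ve$ for $s \le t_\ve$ from \eqref{5.8}, the two Kirchhoff integrals coincide, so $G(\cdot, t_\ve) = 0$.

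Now the interval $[t_\ve, t_{3,\ve}]$ has the fixed finite length $1 + 2R_\rho$, and integrating the bound $|\dot\om_\ve| \le C\ve$ from \eqref{5.9} gives $|\om_\ve(s) - \rw_\ve| \le C\ve(1 + 2R_\rho)$ for all $s$ in this interval. Hence $\Vert (\om_\ve(s) - \rw_\ve) \we \varrho \Vert_{L^2} = O(\ve)$ uniformly in $s$ (using that $\varrho$ is smooth and compactly supported), and Duhamel's formula, or equivalently the energy identity for the wave equation applied to $G$ starting from zero initial data at $t_\ve$, yields
\begin{equation*}
\Vert G(\cdot, t_{3,\ve}) \Vert_{\F} \le \int_{t_\ve}^{t_{3,\ve}} \Vert (\om_\ve(s) - \rw_\ve) \we \varrho \Vert_{L^2} \, ds = O(\ve).
\end{equation*}

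This already gives the $\dot H^1 \oplus L^2$ bound on all of $\R^3$; to upgrade to the $H^1(B_{3R_\rho+1}) \oplus L^2(B_{3R_\rho+1})$ norm I need to control $\Vert A_\ve - A_{\rw_\ve}\Vert_{L^2(B_{3R_\rho+1})}$ as well. For this I invoke finite propagation speed: since $\varrho$ is supported in $B_{R_\rho}$ and $G$ vanishes at $t_\ve$, the support of $G(\cdot, t_{3,\ve})$ lies inside $B_{R_\rho + (t_{3,\ve} - t_\ve)} = B_{3R_\rho + 1}$, so the $A$-component has compact support in this ball and Poincar\'e's inequality converts the $\dot H^1$ bound into the desired $H^1$ bound. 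I expect no genuine analytic obstacle here; the only delicate point is the bookkeeping, namely verifying that the modified source vanishes on $(-\infty, t_\ve]$ so that the subtraction of Kirchhoff integrals reduces cleanly to a Duhamel integral over a fixed-length time window with an $O(\ve)$ source.
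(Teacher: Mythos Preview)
Your proof is correct and follows essentially the same route as the paper: the paper subtracts the Kirchhoff representations \eqref{A20eps} and \eqref{solinf} directly to obtain
\[
F_{\ve}(x,t_{3,\ve})-F_{\rw_{\ve}}(x)=\int_{t_{\ve}}^{t_{3,\ve}}\cK_{t_{3,\ve}-s}(x)*\begin{pmatrix}0\\(\om_\ve(s)-\rw_\ve)\we\varrho(x)\end{pmatrix}\,ds,
\]
which is exactly your Duhamel integral for $G$ with zero data at $t_\ve$, and then invokes \eqref{5.8}--\eqref{5.9} for the $O(\ve)$ bound on the source over the fixed-length window. Your write-up is in fact more complete, since you spell out the Poincar\'e step needed to pass from the $\dot H^1$ bound to the full $H^1(B_{3R_\rho+1})$ bound, a point the paper leaves implicit.
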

\begin{proof}
By \eqref{A20eps} and \eqref{solinf},
$$ 
F_{\ve}(x,t_{3,\ve})-F_{\rw_{\ve}}(x)\
= 
\int\limits^{t_{3,\ve}}_{t_{\ve}}\cK_{t_{3,\ve}-s}(x)*\left(\ba{c} 0\\ (\om_\ve(s)-\rw_\ve)\we\varrho(x)\ea\right)\, ds. 
$$
Hence,  (\re{5.8}), (\re{5.9}) imply \eqref{fin}.  
\end{proof}
\begin{cor}
\be\la{fin+} 
\Vert F_{\ve}(\cdot,t_{3,\ve})-F_{\rw_\ve}(\cdot)\Vert_{H^1(\R^3)\oplus L^2(R^3)}={\cal O}(\ve). 
\ee 
\end{cor}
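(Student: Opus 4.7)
The plan is to derive this corollary as an essentially immediate consequence of the preceding Lemma combined with the support property of Lemma \ref{44.2}-i). The key observation is that the difference $G(x) := F_{\ve}(x,t_{3,\ve}) - F_{\rw_\ve}(x)$ actually vanishes identically outside the ball $B_{3R_\rho+1}$, so enlarging the domain of integration from $B_{3R_\rho+1}$ to all of $\R^3$ contributes nothing.

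To make this precise, first I would verify the numerical identity
$$
t_{3,\ve} - t_{\ve} + R_\rho = (t_{\ve} + 1 + 2R_\rho) - t_{\ve} + R_\rho = 3R_\rho + 1,
$$
so that the soliton identification in Lemma \ref{44.2}-i), evaluated at $t = t_{3,\ve}$, becomes precisely $F_{\ve}(x, t_{3,\ve}) = F_{\rw_\ve}(x)$ for $|x| > 3R_\rho + 1$. Consequently $G(x) = 0$ pointwise on the open exterior of $B_{3R_\rho+1}$.

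Next I would promote this pointwise vanishing to the statement that the extension-by-zero of $G|_{B_{3R_\rho+1}}$ lies in $H^1(\R^3) \oplus L^2(\R^3)$ with the same norm as on the ball. For the $L^2$ component ($\Pi$-part) this is trivial. For the $H^1$ component ($A$-part), both $A_{\ve}(\cdot,t_{3,\ve})$ and $A_{\rw_\ve}$ are globally defined elements of $\dot H^1$ which coincide on the exterior region, so their difference has matching (in fact zero) traces from inside and outside the sphere $\{|x| = 3R_\rho + 1\}$, and the weak gradient on $\R^3$ agrees with the weak gradient on $B_{3R_\rho+1}$ extended by zero. Thus
$$
\Vert G \Vert_{H^1(\R^3)\oplus L^2(\R^3)} = \Vert G \Vert_{H^1(B_{3R_\rho+1})\oplus L^2(B_{3R_\rho+1})},
$$
and the right-hand side is $\mathcal{O}(\ve)$ by \eqref{fin}.

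I do not expect any real obstacle here; the only point requiring care is the verification that the cutoff at the sphere does not introduce spurious distributional contributions to $\nabla A$, which is handled by the fact that the two fields being compared are both genuine solutions of the same wave/Poisson structure on a neighborhood of the sphere and therefore agree there with all their distributional derivatives.
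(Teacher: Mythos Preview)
Your proposal is correct and follows exactly the route implicit in the paper: the paper notes just before the preceding Lemma that $F_{\ve}(x,t_{3,\ve})=F_{\rw_\ve}(x)$ outside $B_{3R_\rho+1}$ by Lemma~\ref{44.2}-i), and the Corollary is then immediate from \eqref{fin}. Your additional care about matching traces on the sphere is more than the paper spells out, but the underlying idea is identical.
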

%%%%%%%%%%%%%%%%%%%%%%%%%%%%%%%%%%%%%%%%%%%%%%%%%%
\subsection{Asymptotics for large $t$}
%%%%%%%%%%%%%%%%%%%%%%%%%%%%%%%%%%%%%%%%%%%%%%%%%%% 
First,  we express the Lorentz force equation   for $t\ge T_\ve:=t_{3,\ve}$ in terms of the field $F_{\ve}$. In this region $\om_{\ve}(t)=\om(t)$. 
Thus, we can change $\om_{\ve}(t)$ by $\om(t)$ in the equation (\re{Bve})  for $F_{\ve}$: 
$$
\dot F_{\ve}(x,t)=\begin{pmatrix} 0 & 1 \\ \Delta & 0\end{pmatrix} F_\ve(x,t)+\begin{pmatrix} 0\\ \om(t)\we\varrho(x)\end{pmatrix},\quad t\ge T_\ve.
$$
Further, one has $F_{\ve}(x,t)=F_{r}(x,t)$  inside   the light cone $\{|x|<t-t_{2,\ve}\}$ by Lemma \re{44.2}, iii). 
Thus, for $t\ge T_\ve$ in ${\rm supp}\,\rho(x)$ we have $F=F_{\ve}+F_{K}$,  and hence 
\be\la{2eps} 
 I\dot \om(t)=\langle\Pi_{\ve}(x,t)\we \varrho(x)\rangle+\om(t)\we\langle \varrho(x)\we A_{\ve}(x,t)\rangle+f(t),\,\,\,t\ge T_\ve, 
\ee
where 
$$ 
f(t):=\langle \Pi_{K}(x,t)\we \varrho(x)\rangle+\om(t)\we\langle  \varrho(x)\we A_{K}(x,t)\rangle. 
$$
\begin{lemma} Let $(A_0,\Pi_0)$  satisfies (\re {A-8}) with some $\si >1/2$. Then
\be\la{asf} 
f(t)={\cal O}(t^{-1-\si}),\quad t\to \infty.   
\ee
\end{lemma}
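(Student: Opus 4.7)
The plan is to estimate the two summands of $f(t)$ separately, using Lemma \ref{WPI00} together with the compact support of $\varrho$ and the boundedness of $\omega(t)$. Since $\varrho(x)=x\rho(x)$ is supported in $B_{R_\rho}$ and, by \eqref{ov-v}, $|\omega(t)|\le\overline{\omega}$, the only nontrivial ingredient will be sharp enough pointwise decay of $\Pi_K$ and $A_K$ on $B_{R_\rho}$ as $t\to\infty$.

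First I would handle the term $\langle\Pi_K(x,t)\wedge\varrho(x)\rangle=\int_{B_{R_\rho}}\Pi_K\wedge\varrho\,dx$. The computation behind Lemma \ref{WPI00}, namely the estimate \eqref{WPbd} combined with the asymptotics \eqref{alpha}, yields $|\Pi_K(x,t)|=\mathcal{O}(t^{-1-\sigma})$ uniformly for $x\in B_{R_\rho}$ once $t>2R_\rho$. Since $\varrho\in L^1$ with compact support, this immediately delivers $|\langle\Pi_K\wedge\varrho\rangle|=\mathcal{O}(t^{-1-\sigma})$.

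For the second term $\omega(t)\wedge\langle\varrho\wedge A_K\rangle$, the uniform bound on $|\omega(t)|$ reduces matters to showing $|\langle\varrho\wedge A_K\rangle|=\mathcal{O}(t^{-1-\sigma})$. Lemma \ref{WPI00} controls only $\nabla A_K$, not $A_K$ itself; plugging the bound $|A_0(x)|=\mathcal{O}(|x|^{-\sigma})$ from \eqref{A-8} directly into Kirchhoff's formula would yield only the insufficient $|A_K(x,t)|=\mathcal{O}(t^{-\sigma})$. The key observation is the vanishing zeroth moment coming from the spherical symmetry \eqref{rosym},
\[
\int_{\R^3}\varrho(x)\,dx=\int x\rho_{rad}(|x|)\,dx=0,
\]
which allows me to subtract a constant vector and write
\[
\langle\varrho\wedge A_K\rangle=\int x\rho(x)\wedge\bigl[A_K(x,t)-A_K(0,t)\bigr]\,dx.
\]
The mean value theorem then gives $|A_K(x,t)-A_K(0,t)|\le|x|\sup_{|y|\le R_\rho}|\nabla A_K(y,t)|=\mathcal{O}(t^{-1-\sigma})$ for $x\in B_{R_\rho}$ by the uniform $\nabla A_K$-decay of Lemma \ref{WPI00}, and the claim follows.

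The only real subtlety is precisely this last point: a naive bound on $A_K$ itself decays only as $t^{-\sigma}$, so one must exploit the spherical symmetry of $\rho$ (equivalently, the vanishing dipole moment of $\varrho$) in order to convert the $\nabla A_K$-estimate from Lemma \ref{WPI00} into the correct $t^{-1-\sigma}$ decay of the moment $\langle\varrho\wedge A_K\rangle$.
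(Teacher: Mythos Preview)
Your proof is correct, and in fact handles the $A_K$-term more carefully than the paper does. The paper treats the $\Pi_K$-term exactly as you do, but for the second term it simply writes
\[
|\langle\varrho\wedge A_K\rangle|\le \Vert\varrho\Vert_{R_\rho}\Vert A_K\Vert_{R_\rho}\le C(\rho)\Vert\nabla A_K\Vert_{R_\rho}\le C(\rho)\,t^{-1-\sigma},
\]
justifying the middle inequality by a ``Sobolev embedding $\dot H^1(B_{R_\rho})\subset L^2(B_{R_\rho})$''. Taken literally that embedding is false (nonzero constants have vanishing gradient), so some unstated cancellation must be doing the work. Your argument makes that cancellation explicit: the vanishing moment $\int\varrho\,dx=0$, which follows from the radiality assumption \eqref{rosym}, allows you to subtract the constant vector $A_K(0,t)$ and then invoke the mean value theorem together with the uniform pointwise bound on $\nabla A_K$ from Lemma~\ref{WPI00}. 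This is a genuinely cleaner route, and it pinpoints exactly where the spherical symmetry of $\rho$ is used in obtaining \eqref{asf}.
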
 
\begin{proof} The asymptotics follow from the condition \eqref{rosym},  the bound (\re{ov-v}) and the  asymptotics  (\re{Fk-decay}). 
In particular, \eqref{rosym} and  (\re{Fk-decay}) imply
$$
|\langle  \varrho(x)\we A_{K}(x,t)\rangle| \le \Vert\varrho(x)\Vert_{R_\rho}\Vert A_{K}(x,t)\Vert_{R_\rho}
\le C(\rho)\Vert \na A_{K}(x,t)\Vert_{R_\rho}\le  C(\rho)t^{-1-\si}
$$
by the  Sobolev embedding $\dot H^1(B_{R_\rho})\subset L^2(B_{R_\rho})$. 
\end{proof}
Now we  prove that for large $t$,
${\cal H}(t)$ and $|\pi(t)|$ are ``almost conserved'' along a trajectory 
$Y_{\ve}(t)=(A_\ve(t),\Pi_\ve(t),\om(t))$. 
%%%%%%%%%%%%%
\begin{lemma} Let $(A_0,\Pi_0)$  satisfies (\re {A-8}) with some $\si >1/2$.
Then  the oscillations of the Hamiltonian and the total momentum are small for large $T$ and  $t>T$,
\beqn\la{astiH} 
{\cal H}(Y_{\ve}(t))&=&{\cal H}(Y_{\ve}(T))+{\cal O}(T^{-\si}),\\ 
\la{asP} 
|\pi(Y_{\ve}(t))|&=&|\pi(Y_{\ve}(T))|+{\cal O}(T^{-\si}). 
\eeqn 
\end{lemma}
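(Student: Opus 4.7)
The plan is to compute $\dot{\mathcal{H}}(Y_\ve(t))$ and $\dot\pi(Y_\ve(t))$ directly along the modified trajectory for $t \ge T_\ve$. In this range $\omega_\ve \equiv \omega$, so $F_\ve$ solves the inhomogeneous wave system \eqref{Bve} driven by $\omega(t)\wedge\varrho$, while $\omega$ obeys the perturbed Lorentz equation \eqref{2eps} with error $f(t) = \mathcal{O}(t^{-1-\sigma})$ by \eqref{asf}. The integrability of $f$ at infinity is what will yield both estimates.

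For the Hamiltonian, using $\dot A_\ve = \Pi_\ve$ and $\dot\Pi_\ve = \Delta A_\ve + \omega\wedge\varrho$ from \eqref{Bve}, together with the identity $\langle\Pi_\ve,\Delta A_\ve\rangle = -\langle\curl\Pi_\ve,\curl A_\ve\rangle$ (valid since $F_\ve$ is divergence-free by the Coulomb gauge \eqref{Cg}, so that $\Delta A_\ve = -\curl\curl A_\ve$), the pure field contributions cancel and I obtain
\[
\frac{d}{dt}\mathcal{H}(Y_\ve(t)) = \langle\Pi_\ve,\omega\wedge\varrho\rangle + I\,\omega\cdot\dot\omega.
\]
Substituting \eqref{2eps} for $I\dot\omega$, applying the scalar triple product identity in the form $\langle\Pi_\ve,\omega\wedge\varrho\rangle = -\omega\cdot\langle\Pi_\ve\wedge\varrho\rangle$ and the orthogonality $\omega\cdot(\omega\wedge v) = 0$, the bilinear coupling terms cancel exactly and $\dot{\mathcal{H}}(Y_\ve(t)) = \omega(t)\cdot f(t)$. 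Since $|\omega|$ is bounded by \eqref{ov-v} and $|f(t)| \le Ct^{-1-\sigma}$, integrating from $T$ to $t$ yields \eqref{astiH}.

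For the momentum, set $\pi_\ve := I\omega + \langle\varrho\wedge A_\ve\rangle$ and differentiate; the term $\langle\varrho\wedge\Pi_\ve\rangle$ coming from $\dot A_\ve = \Pi_\ve$ cancels the corresponding contribution to $I\dot\omega$ in \eqref{2eps}, leaving
\[
\dot\pi_\ve(t) = \omega(t)\wedge\pi_\ve(t) + f(t),
\]
after using $\omega\wedge\langle\varrho\wedge A_\ve\rangle = \omega\wedge(\pi_\ve - I\omega) = \omega\wedge\pi_\ve$. To extract an estimate for $|\pi_\ve|$, I introduce the comparison vector $R(t)$ defined by $\dot R = \omega\wedge R$, $R(T) = \pi_\ve(T)$, so that $|R(t)| \equiv |\pi_\ve(T)|$. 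The difference $\delta := \pi_\ve - R$ satisfies $\dot\delta = \omega\wedge\delta + f$ and $\delta(T) = 0$, and since $\delta\cdot(\omega\wedge\delta) = 0$, one has $\frac{1}{2}\frac{d}{dt}|\delta|^2 = \delta\cdot f$, from which $|\delta(t)| \le \int_T^t |f(s)|\,ds = \mathcal{O}(T^{-\sigma})$. The reverse triangle inequality $\bigl||\pi_\ve(t)| - |\pi_\ve(T)|\bigr| = \bigl||\pi_\ve(t)| - |R(t)|\bigr| \le |\delta(t)|$ then gives \eqref{asP}.

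The main delicate point will be verifying the algebraic cancellation in $\dot{\mathcal{H}}$: one must carefully check that the Hamiltonian coupling between field and particle survives the substitution of $(F,\omega_\ve)$ by $(F_\ve,\omega)$. This is legitimate precisely because for $t \ge T_\ve$ one has $\omega_\ve \equiv \omega$ in \eqref{Bve}, and because $\mathrm{supp}\,\varrho \subset B_{R_\rho}$ lies inside the inner light cone $\{|x| < t - t_{2,\ve}\}$ on which $F_\ve = F$ by Lemma \ref{44.2}(iii); consequently every integral against $\varrho$ appearing in \eqref{2eps} can be consistently interpreted in terms of $F_\ve$, and the Kirchhoff tail absorbs into the single remainder $f(t)$ with controlled decay.
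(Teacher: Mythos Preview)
Your proof is correct and follows essentially the same approach as the paper: differentiate $\mathcal{H}(Y_\ve)$ and $\pi(Y_\ve)$ along the modified trajectory using \eqref{Bve} and \eqref{2eps}, obtain the cancellations leaving only $\omega\cdot f$ and (effectively) $\pi_\ve\cdot f$, and integrate using \eqref{asf}. The one cosmetic difference is in \eqref{asP}: the paper simply computes $\pi_\ve\cdot\dot\pi_\ve=\pi_\ve\cdot f$ and writes $\frac{d}{dt}|\pi_\ve|=\frac{\pi_\ve\cdot f}{|\pi_\ve|}=\mathcal{O}(t^{-1-\sigma})$ directly, whereas you introduce the rotating comparison vector $R$; your detour has the minor advantage of sidestepping the division by $|\pi_\ve|$, but the underlying mechanism---orthogonality of $\omega\wedge\pi_\ve$ to $\pi_\ve$---is identical.
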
 
%%%%%%%%%%%%  
\begin{proof} 
\eqref{mls2},  \eqref{2eps}, \eqref{asf}  implies for $t> T$
\beqn\nonumber 
\fr{d}{dt}{\cal H}(Y_{\ve}(t))&=&\frac 12\fr{d}{dt}\left(I\om^2(t)+\int(\Pi_{\ve}^2(x,t)+{\rm curl} A_\ve^2(x,t))dx\right)\\
\nonumber
&=&I\om(t)\cdot\dot \om(t)+\langle \Pi_\ve(x,t),\dot \Pi_\ve(x,t)\rangle+\langle {\rm curl} A_\ve(x,t),{\rm curl} \dot A_\ve(x,t)\rangle\\ 
\nonumber
&=&\om(t)\cdot\big(\langle \Pi_{\ve}(x,t)\we \varrho(x)\rangle+\om(t)\we\langle \varrho(x)\we A_{\ve}(x,t)\rangle+f(t)\big)\\
\nonumber
&+&\langle \Pi_{\ve}(x,t),\Delta A_{\ve}(x,t)+\om\we\varrho(x)\rangle+\langle {\rm curl}\, A_\ve(x,t), {\rm curl}\, \Pi_\ve(x,t)\rangle\\
\label{H-dot}
&=&\om(t)\cdot f(t)={\cal O}(t^{-1-\si}).
\eeqn 
since
$$
\om\cdot\big(\om\we\langle \varrho\we A_{\ve}\rangle\big)=0,~
\langle \Pi_{\ve},\Delta A_{\ve}\rangle+\langle {\rm curl}\, A_\ve, {\rm curl}\, \Pi_\ve\rangle=0,~
\om\cdot\langle \Pi_{\ve}\we \varrho\rangle\big)+\langle \Pi_{\ve},\om\we\varrho\rangle=0
$$
Similarly, for $t> T$, one has  
\beqn\nonumber
&&\pi(Y_{\ve})\cdot \fr{d}{dt}\pi(Y_{\ve})=(I\om+\langle\varrho\we A_{\ve}\rangle)\cdot\fr{d}{dt}(I\om+\langle\varrho\we A_{\ve}\rangle)\\
\nonumber
&&=(I\om+\langle\varrho\we A_{\ve}\rangle)\cdot\big(\langle \Pi_{\ve}\we \varrho\rangle+\om(t)\we\langle \varrho\we A_{\ve}\rangle+f
+\langle\varrho\we \Pi_{\ve}\rangle\big)\\
\nonumber
&&=(I\om+\langle\varrho\we A_{\ve}\rangle)\cdot\big(\om(t)\we\langle \varrho\we A_{\ve}\rangle+f\big)
=(I\om+\langle\varrho\we A_{\ve}\rangle)\cdot f=\pi(Y_{\ve})\cdot f
\eeqn
Hence,
\be\la{pi-dot}
\fr{d}{dt}|\pi(Y_{\ve}(t))|=\frac{\pi(Y_{\ve}(t))\cdot \fr{d}{dt}\pi(Y_{\ve}(t))}{|\pi(Y_{\ve}(t))|}=\frac{\pi(Y_{\ve}(t))\cdot f(t)}{|\pi(Y_{\ve}(t))|}={\cal O}(t^{-1-\si}).
\ee
Finally, \eqref{H-dot} and \eqref{pi-dot} imply (\re{astiH}) and  (\re{asP}), respectively. 
\end{proof} 
%%%%%%%%%%%%%%%%%%%%%%%%%%%%%%%%%%%%%%%%%%%%%%%%%%
\setcounter{equation}{0} 
\section{Proof of Proposition \re{5.1}}
%%%%%%%%%%%%%%%%%%%%%%%%%%%%%%%%%%%%%%%%%%%%%%%%%%%  
  For $t\ge T_{\ve}:=t_{3,\ve}$, one has 
 \be\la{piY}
  \pi(Y_{\ve}(t))=\pi(Y_{\ti \om(t)})=\ti\om(t)I_{\rm eff},~~{\rm where} ~~\ti\om(t):=\frac{\pi(Y_{\ve}(t))}{I_{\rm eff}}.
  \ee
From (\re{asP})  it follows that 
\be\la{osctiv} 
{\rm osc}_{[T,+\infty)}|\ti \om(t)|\to0\,\,\,{\rm as}\,\,\,T\to+\infty. 
\ee
Definition \eqref{pi} together with \eqref{5.8}--\eqref{5.9} and  \eqref{fin+} imply
$$
  \pi(Y_{\ve}(T_\ve))=I\om(T_{\ve})+\langle \varrho, A_{\ve}(T_{\ve})\rangle=
 I\rw_{\ve} +\langle \varrho, A_{\rw_{\ve}}\rangle +{\cal O}(\ve)=\rw_\ve I_{\rm eff}+{\cal O}(\ve)
 $$
Comparing with \eqref{piY}, we get  
\be\la{ww}
\ti \om(T{_\ve})-\rw_{\ve}={\cal O}(\ve).
\ee 
Together with (\re{osctiv}) this implies the bound $|\ti \om(t)|\le \ov \om_1$ for $t\ge T$. 

Now we apply the orbital stability estimate (\re{o-s}) and get  
\be\la{lo-bound-t} 
\alpha(\Vert F_{\ve}(\cdot,t)-F_{\ti \om(t)}\Vert^2_{\cal F})\le \Lambda_{\ti \om(t)}(Y_{\ve}(t))-\Lambda_{\ti \om(t)}(Y_{\ti \om(t)})
={\cal H}(Y_{\ve}(t))-{\cal H}(Y_{\ti \om(t)}).
\ee 
by  \eqref{Lam} and \eqref{piY}.
%%%%%%%%%%%
\begin{lemma}\la{rhs} 
The right hand side of (\re{lo-bound-t}) is arbitrary small uniformly in $t\ge T$ 
for sufficiently small $\ve$ and sufficiently large $T$.
\end{lemma}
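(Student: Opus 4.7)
The plan is to telescope the difference
\[
\mathcal{H}(Y_\ve(t))-\mathcal{H}(Y_{\tilde\om(t)})
=\underbrace{[\mathcal{H}(Y_\ve(t))-\mathcal{H}(Y_\ve(T_\ve))]}_{(\mathrm{I})}
+\underbrace{[\mathcal{H}(Y_\ve(T_\ve))-\mathcal{H}(S_{\rw_\ve})]}_{(\mathrm{II})}
+\underbrace{[\mathcal{H}(S_{\rw_\ve})-\mathcal{H}(S_{\tilde\om(t)})]}_{(\mathrm{III})},
\]
where I have used $Y_{\tilde\om(t)}=S_{\tilde\om(t)}$ (a soliton), and I have inserted the natural reference soliton $S_{\rw_\ve}$ at the cutoff time $T_\ve=t_{3,\ve}$.

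Term (I) is controlled directly by the almost-conservation of energy \eqref{astiH}, applied on $[T_\ve,t]$: it is $\mathcal{O}(T_\ve^{-\si})$, and since $T_\ve\to\infty$ as $\ve\to0$ this is small once $\ve$ is small. For term (II), the definition \eqref{5.8} and the bound \eqref{5.7} give $|\om(T_\ve)-\rw_\ve|=\int_{t_\ve}^{T_\ve}|\dot\om|\,ds=\mathcal{O}(\ve)$ because $T_\ve-t_\ve=1+2R_\rho$ is bounded; combined with the field estimate \eqref{fin+}, this yields $\|Y_\ve(T_\ve)-S_{\rw_\ve}\|_{\Y}=\mathcal{O}(\ve)$. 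Since $\mathcal{H}$ is a continuous quadratic functional on $\Y$ and both points lie in a fixed bounded subset (by energy conservation \eqref{ecoA} and \eqref{ov-v}), (II)$\,=\mathcal{O}(\ve)$.

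Term (III) is where the spherical symmetry \eqref{rosym} is essential: the covariance $A_{R\om}(x)=R\,A_\om(R^{-1}x)$ for $R\in SO(3)$ (which follows from uniqueness of \eqref{solit2} together with $\varrho(Rx)=R\varrho(x)$) implies that $\int|\curl A_\om|^2\,dx$ is rotationally invariant, so $\mathcal{H}(S_\om)$ is a smooth function of $|\om|$ alone. Hence (III) is controlled by $\bigl||\rw_\ve|-|\tilde\om(t)|\bigr|$; by the triangle inequality, \eqref{ww}, and \eqref{osctiv},
\[
\bigl||\rw_\ve|-|\tilde\om(t)|\bigr|\le |\rw_\ve-\tilde\om(T_\ve)|+\bigl||\tilde\om(T_\ve)|-|\tilde\om(t)|\bigr|
=\mathcal{O}(\ve)+\mathrm{osc}_{[T,+\infty)}|\tilde\om|,
\]
which is $\mathcal{O}(\ve)+o(1)$ as $T\to\infty$ (with the $o(1)$ independent of $\ve$ once $T\ge T_\ve$).

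Putting the three pieces together, the right-hand side of \eqref{lo-bound-t} is bounded by $C\ve+C\,T_\ve^{-\si}+C\,\mathrm{osc}_{[T,+\infty)}|\tilde\om|$, so given $\eta>0$ I first choose $\ve$ small enough that both $C\ve$ and $C\,T_\ve^{-\si}$ are below $\eta/2$, and then choose $T\ge T_\ve$ large enough that the oscillation term is also below $\eta/2$. There is no single hard step here; the only subtlety is keeping track of the order of quantifiers (first $\ve$, then $T$ depending on $\ve$), which is exactly the order in which the lemma is stated.
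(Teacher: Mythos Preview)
Your argument is correct and follows essentially the same telescoping as the paper (which inserts $\mathcal{H}(Y_{\tilde\om(T)})$ rather than $\mathcal{H}(S_{\rw_\ve})$ as the intermediate term, a difference absorbed by \eqref{ww} and \eqref{fin+}). One minor slip: the bound $\bigl||\tilde\om(T_\ve)|-|\tilde\om(t)|\bigr|\le \mathrm{osc}_{[T,+\infty)}|\tilde\om|$ is only valid for $T\le T_\ve$, not $T\ge T_\ve$; the clean fix is to note directly from \eqref{asP} that $\mathrm{osc}_{[T_\ve,+\infty)}|\tilde\om|=\mathcal{O}(T_\ve^{-\si})$, so that all three terms are already $\mathcal{O}(\ve)+\mathcal{O}(T_\ve^{-\si})$ and no separate large-$T$ choice is needed.
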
 
%%%%%%%%%%%%%
\begin{proof}
One has
$$
{\cal H}(Y_{\ve}(t))-{\cal H}(Y_{\ti \om(t)})=[{\cal H}(Y_{\ve}(t))-{\cal H}(Y_{\ve}(T))]
-[{\cal H}(Y_{\ti\om(t)})-{\cal H}(Y_{\ti\om(T)})]+[{\cal H}(Y_{\ve}(T))-{\cal H}(Y_{\ti \om(T)})],
$$
where 
${\cal H}(Y_{\ve}(t))-{\cal H}(Y_{\ve}(T))={\cal O}(T^{-\si})$ by (\re{astiH}),
and ${\cal H}(Y_{\ti\om(t)})-{\cal H}(Y_{\ti\om(T)})={\cal O}(\ve)$ by \eqref{osctiv} and \eqref{ww}.
Finally, ${\cal H}(Y_{\ve}(T))-{\cal H}(Y_{\ti \om(T)})$ is small
by (\re{ret-sol}), (\re{fin+}) and \eqref{ww}.
\end{proof}
 From this lemma it follows that  
$$
{\rm osc}_{[T,+\infty)}\Vert F_{\ve}(\cdot,t)\Vert_{\cal F} \to 0,\quad   T\to+\infty.
$$ 
 Indeed, 
 $$
 F_{\ve}(t_2)-F_{\ve}(t_1)= 
(F_{\ve}(t_2)-F_{\ti \om(t_2)})-(F_{\ve}(t_1)-F_{\ti \om(t_1)})+ (F_{\ti \om(t_2)}-F_{\ti \om(t_1)}). 
$$ 
For $t_1,t_2>T$ the first and the second summands are small by (\re{lo-bound-t}) and the lemma, the third summand is small by (\re{osctiv}), 
since the soliton field $F_\om$ depends continuously on $\om$ in $\cal F$.
 
Together with (\re{asP}) this implies ${\rm osc}_{[T,+\infty)}\om(t)\to0$ as $T\to+\infty$ and hence (\re{osc}) follows. 
Proposition \re{5.1} is proved.\hfill $\Box$ 
 
%%%%%%%%%%%%%%%%%%%

{\large Institute for Information Transmission Problems, RAS }\\
{\it E-mail address:} ek@iitp.ru
\smallskip

{\large Institute for Information Transmission Problems, RAS }\\
{\it E-mail address:} akomech@iitp.ru
\end{document}